\documentclass[letterpaper, 10 pt, conference]{ieeeconf}  
\IEEEoverridecommandlockouts
\overrideIEEEmargins
\usepackage{balance}
\usepackage{color}
\usepackage{caption}
\usepackage{enumerate}
\usepackage{cite}
\usepackage{empheq}
\usepackage{mathrsfs}

\usepackage{mathtools}

\usepackage{tikz}
\usepackage{circuitikz}

\usepackage[font=small,labelfont=bf]{caption}

\usepackage{subfig}

\makeatletter
\@dblfptop 0pt
\makeatother

\makeatletter
\pgfcircdeclarebipole{}{\ctikzvalof{bipoles/interr/height 2}}{spst}{\ctikzvalof{bipoles/interr/height}}{\ctikzvalof{bipoles/interr/width}}{

    \pgfsetlinewidth{\pgfkeysvalueof{/tikz/circuitikz/bipoles/thickness}\pgfstartlinewidth}

    \pgfpathmoveto{\pgfpoint{\pgf@circ@res@left}{0pt}}
    \pgfpathlineto{\pgfpoint{.6\pgf@circ@res@right}{\pgf@circ@res@up}}
    \pgfusepath{draw}   
}

\def\pgf@circ@spst@path#1{\pgf@circ@bipole@path{spst}{#1}}
\tikzset{switch/.style = {\circuitikzbasekey, /tikz/to path=\pgf@circ@spst@path, l=#1}}
\tikzset{spst/.style = {switch = #1}}
\makeatother

\makeatletter
\let\proof\@undefined                        
\let\endproof\@undefined                  
\makeatother
\usepackage{graphicx,amssymb,amstext,amsmath,amsthm}

\usepackage[bookmarks=true]{hyperref}
\usepackage{algorithm,algorithmicx,algpseudocode}
\algnewcommand{\algorithmicgoto}{\textbf{go to}}%
\algnewcommand{\Goto}[1]{\algorithmicgoto~\ref{#1}}%
\algnewcommand{\LineComment}[1]{\Statex \(\triangleright\) #1}
\algnewcommand{\LineCommentN}[1]{\Statex \hspace{1cm}\(\triangleright\) #1}

\usepackage{multirow}
\usepackage{stfloats}

\newtheorem{prop}{Proposition} 

\newtheorem{thm}{Theorem}
	\newtheorem{assumption}{Assumption}
\newtheorem{lem}{Lemma}
\newtheorem{defn}{Definition}

\newtheorem{problem}{Problem}

\setlength{\abovedisplayskip}{5pt}
\setlength{\belowdisplayskip}{5pt}
\setlength{\textfloatsep}{3pt}
\setlength{\floatsep}{3.25pt}
\setlength{\intextsep}{3.25pt}
\setlength{\abovecaptionskip}{3pt}
\setlength{\belowcaptionskip}{3pt}

\setlength{\arraycolsep}{2.75pt}

\usepackage{setspace}

\let\oldbibliography\thebibliography
\renewcommand{\thebibliography}[1]{%
  \oldbibliography{#1}%
}



\newcommand{\yong}[1]{{\color{black} #1}}
\newcommand{\moh}[1]{{\color{black} #1}}
\newcommand{\mo}[1]{{\color{black} #1}}
\newcommand{\fa}[1]{{\color{black} #1}}
\newcommand{\mk}[1]{{\color{black} #1}}
\newcommand{\mkr}[1]{{\color{black} #1}}

\begin{document}

\title{\LARGE \bf Interval Observer Synthesis for Locally Lipschitz Nonlinear Dynamical Systems via Mixed-Monotone Decompositions} 

\author{%
Mohammad Khajenejad, Fatima Shoaib and Sze Zheng Yong\\
\thanks{
M. Khajenejad, Fatima Shoaib and S.Z. Yong are with the School for Engineering of Matter, Transport and Energy, Arizona State University, Tempe, AZ, USA. (e-mail: \{mkhajene, fshoaib, szyong\}@asu.edu.)}
\thanks{This work is partially supported by National Science Foundation grants CNS-1932066 and CNS-1943545.}
}

\maketitle
\thispagestyle{empty}
\pagestyle{empty}

\begin{abstract}
 This paper proposes a novel unified interval-valued observer synthesis approach for locally Lipschitz nonlinear continuous-time (CT) and discrete-time (DT) systems with nonlinear observations. A key feature of our proposed observer, which is derived using \emph{mixed-monotone decompositions}, is that it is
 \emph{correct} by construction (i.e., the true state trajectory of the system is \emph{framed} by the states of the observer) without the need for imposing additional constraints and assumptions such as global Lipschitz continuity or contraction, 
 as is done in existing approaches in the literature.  
 Furthermore, we derive sufficient conditions for designing stabilizing observer gains in the form of Linear Matrix Inequalities (LMIs).
 Finally, we \mk{compare the performance of our} observer design \mk{with some benchmark CT and DT observers in the literature}. 
 \end{abstract}

\section{Introduction} 

\mk{K}nowledge of system states is essential in almost all engineering applications, including fault detection, system identification and monitoring. However, in many realistic cases, system states are not fully measurable/measured and further, the sensor measurements may be limited or inaccurate. Thus, state observers have been designed to estimate system's states based on system dynamics and noisy/uncertain observations. When dealing with systems with set-valued (i.e., distribution-free) uncertainties, 
%
\emph{interval observers} have become increasingly popular due to their simple principles and computational efficiency \cite{wang2015interval,mazenc2013robust}.

The design of interval observers (a particular form of set-valued observers) has been extensively investigated in the literature for various classes of dynamical systems such as linear time-invariant (LTI) systems \cite{mazenc2011interval}, linear parameter varying (LPV)/quasi-LPV systems \cite{wang2015interval,chebotarev2015interval}, monotone/cooperative dynamics \cite{moisan2007near,raissi2010interval}, Metzler systems \cite{mazenc2013robust} and mixed-monotone dynamics \cite{tahir2021synthesis,khajenejad2021intervalACC,khajenejad2020simultaneousCDC}. To obtain cooperative observer error dynamics, the design of interval observers has either directly relied on monotone systems theory \cite{farina2000positive}, 
 or relatively restrictive assumptions about the existence of certain system properties were imposed to guarantee the {applicability} 
of the proposed approaches. 
\mk{However,} even for linear systems, it is not easy nor guaranteed to synthesize the framer gain \mk{to} 
satisfy correctness and stability at the same time \cite{chambon2016overview}. 
This difficulty to obtain such properties was relaxed for certain classes of systems, by applying time-invariant/varying state transformations \cite{tahir2021synthesis,mazenc2011interval}, transformation to a positive system before designing an observer \cite{cacace2014new} (only applicable to linear systems) or leveraging interval arithmetic or M\"{u}ller's theorem-based approaches \cite{kieffer2006guaranteed}. 

In the context of nonlinear systems, an interval observer design has been proposed in \cite{efimov2013interval} for a class of continuous-time nonlinear systems by
leveraging \emph{bounding functions}, but no necessary and/or sufficient conditions for the existence of bounding functions or how to compute them have been discussed. Moreover, to conclude stability, restrictive assumptions on the nonlinear dynamics have been imposed. {On the other hand,} the authors in \cite{tahir2021synthesis} applied bounding/mixed-monotone decomposition functions to design interval state estimation for nonlinear discrete-time dynamics, where to guarantee positivity of the error dynamics (i.e., the correctness property), conservative additive terms were added to the error dynamics. Moreover, to best of our unnderstanding, the required conditions to guarantee that the computed bounding functions are decomposition functions were not included in the resulting Linear Matrix Inequalities (LMIs). 
On the other hand, our previous work \cite{khajenejad2021intervalACC,khajenejad2020simultaneousCDC} designed interval observer for globally Lipschitz mixed-monotone nonlinear discrete-time systems, where the stability of the proposed observer relied on some sufficient structural system properties. 


 In this paper, we introduce a novel method for synthesizing interval observers using mixed-monotone decompositions for 
 locally Lipschitz nonlinear CT and DT systems with nonlinear observation functions. The main feature and advantage of our proposed observer is that it is \emph{correct} by construction.
 In particular, by leveraging \emph{remainder-form mixed-monotone decomposition functions}, we show that the true state trajectory of the system is guaranteed to frame the true states of the observer by construction. In other words, the observer error system is by design positive (for DT systems) or cooperative (for CT systems) without the need for additional assumptions, e.g., global Lipschitz continuity and contraction. 
 Moreover, we derive sufficient conditions in the form of LMIs that ensure that our proposed correct-by-construction interval observer is also stable, and they can be utilized to design stabilizing observer gains via semi-definite programming. 
 Finally, our unified framework is the first to address the problem of synthesizing interval observers for a very broad range of  locally Lipschitz CT and DT systems that are correct by construction.   
 \section{Preliminaries}
 
 {\emph{{Notation}.}} $\mathbb{R}^n,\mathbb{R}^{n  \times p},\mathbb{D}_n,\mathbb{N},\mathbb{N}_n$ denote the $n$-dimensional Euclidean space and the sets of $n$ by $p$ matrices, $n$ by $n$ diagonal matrices, natural numbers and natural numbers up to $n$, respectively, while $\mathbb{M}_n$ 
 denote\yong{s} the set of all $n$ by $n$ Metzler\footnote{A Metzler matrix is a square matrix in which all the off-diagonal components are nonnegative (equal to or greater than zero).} \mo{matrices}. 
 For $M \in \mathbb{R}^{n \times p}$, $M_{i,j}$ denotes $M$'s entry in the $i$'th row and the $j$'th column, $M^+\triangleq \max(M,\mathbf{0}_{n,p})$, $M^-=M^+-M$ and $|M|\triangleq M^++M^-$, where $\mathbf{0}_{n,p}$ is the zero matrix in $R^{n \times p}$, \yong{while $\textstyle{\mathrm{sgn}}(M) \in \mathbb{R}^{n \times p}$ is the element-wise sign of $M$ with $\textstyle{\mathrm{sgn}}(M_{i,j})=1$ if $M_{i,j} \geq 0$ and $\textstyle{\mathrm{sgn}}(M_{i,j})=-1$, otherwise.} 
 Further, if $p=n$, $M^\text{d}$ denotes a diagonal matrix whose diagonal coincides with the diagonal of $M$, $M^\text{nd} \triangleq M-M^\text{d}$ and $M^{\text{m}} \triangleq M^\text{d}+|M^\text{nd}|$, \mk{while $M \succ 0$ and $M \prec 0$ (or $M \succeq 0$ and $M \preceq 0$) denote that $M$ is positive and negative (semi-)definite, respectively}.

Next, we introduce some useful definitions and results.
\begin{defn}[Interval, Maximal and Minimal Elements, Interval Width]\label{defn:interval}
{An (multi-dimensional) interval {$\mathcal{I} \triangleq [\underline{s},\overline{s}]  \subset 
\mathbb{R}^n$} is the set of all real vectors $x \in \mathbb{R}^n$ that satisfies $\underline{s} \le x \le \overline{s}$, where $\underline{s}$, $\overline{s}$ and $\|\overline{s}-\underline{s}\|\mk{_{\infty}\triangleq \max_{i \in \{1,\cdots,n\}}s_i}$ are called minimal vector, maximal vector and \mk{interval} width of $\mathcal{I}$, respectively}. \yong{An interval matrix can be defined similarly.} 
\end{defn}
\begin{prop}\cite[Lemma 1]{efimov2013interval}\label{prop:bounding}
Let $A \in \mathbb{R}^{n \times p}$ and $\underline{x} \leq x \leq \overline{x} \in \mathbb{R}^n$. Then\moh{,} $A^+\underline{x}-A^{-}\overline{x} \leq Ax \leq A^+\overline{x}-A^{-}\underline{x}$. As a corollary, if $A$ is non-negative, $A\underline{x} \leq Ax \leq A\overline{x}$. 
\end{prop}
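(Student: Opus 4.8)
The plan is to reduce the claim to the elementary monotonicity of multiplication by a nonnegative matrix, applied separately to the positive and negative parts of $A$. First I would invoke the defining decomposition $A = A^+ - A^-$, which holds since $A^- = A^+ - A$ by definition, and observe that both $A^+ \geq \mathbf{0}_{n,p}$ and $A^- \geq \mathbf{0}_{n,p}$ hold entry-wise (the latter because the negative part of each entry is nonnegative). This lets me write $Ax = A^+ x - A^- x$ and then handle the two nonnegative matrices $A^+$ and $A^-$ independently.

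The key auxiliary fact I would establish is the order-preserving property: for any entry-wise nonnegative matrix $B$ and any vectors $y \leq z$, one has $By \leq Bz$. This is immediate from $\big(B(z-y)\big)_i = \sum_j B_{i,j}(z_j - y_j) \geq 0$, since every summand is a product of two nonnegative quantities. Applying this with $B = A^+$ to $\underline{x} \leq x \leq \overline{x}$ yields $A^+\underline{x} \leq A^+ x \leq A^+\overline{x}$, and applying it with $B = A^-$ yields $A^-\underline{x} \leq A^- x \leq A^-\overline{x}$.

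The only point requiring care is the sign reversal: negating the chain for $A^-$ flips the inequalities into $-A^-\overline{x} \leq -A^- x \leq -A^-\underline{x}$. Adding this term-by-term to the chain for $A^+$ then gives $A^+\underline{x} - A^-\overline{x} \leq A^+ x - A^- x \leq A^+\overline{x} - A^-\underline{x}$, which is exactly $A^+\underline{x} - A^-\overline{x} \leq Ax \leq A^+\overline{x} - A^-\underline{x}$ once $Ax = A^+ x - A^- x$ is substituted. Finally, the corollary follows by specialization: if $A$ is non-negative then $A^+ = A$ and $A^- = \mathbf{0}_{n,p}$, so the bounds collapse to $A\underline{x} \leq Ax \leq A\overline{x}$. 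I do not anticipate any genuine obstacle here; the result is purely a matter of careful bookkeeping of the two nonnegative components $A^+$, $A^-$ together with the single sign flip, and the whole argument is elementary.
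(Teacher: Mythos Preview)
Your argument is correct and is the standard elementary proof of this bounding result. Note, however, that the paper does not supply its own proof of this proposition: it is stated as a citation of \cite[Lemma 1]{efimov2013interval} and used as a preliminary tool, so there is nothing in the paper to compare your proof against beyond the reference itself.
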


\begin{defn}[Jacobian Sign-Stability] \label{defn:JSS}
A mapping $f :\mathcal{X} \subset \mathbb{R}^{n} \to  \mathbb{R}^{p}$ is (generalized) Jacobian sign-stable (JSS), if its (generalized) Jacobian matrix entries \fa{do} not change signs on its domain, i.e., if either of the following hold: 
\begin{align*}
&\forall x \in \mathcal{X}, \forall i \in \mathbb{N}_p,\forall j \in \mathbb{N}_n , J_f(x)_{i,j} \geq 0 \ \text{(positive JSS)}\\  
&\forall x \in \mathcal{X}, \forall i \in \mathbb{N}_p,\forall j \in \mathbb{N}_n , J_f(x)_{i,j} \leq 0 \  \text{(negative JSS)},
\end{align*} 
where $J_f(x)$ denotes the Jacobian matrix of $f$ at $x \in \mathcal{X}$. 
\end{defn}
\begin{prop}[Jacobian Sign-Stable Decomposition]\label{prop:JSS_decomp}
Let $f :\mathcal{X} \subset \mathbb{R}^{n} \to  \mathbb{R}^{p}$ and suppose $\forall x \in \mathcal{X}, J_f(x) \in [\underline{J}_f,\overline{J}_f]$, where $\underline{J}_f,\overline{J}_f$ are known matrices in $\mathbb{R}^{p \times n}$. Then, $f$ can be decomposed into a (remainder) affine mapping $Hx$ and a JSS mapping $\mu (\cdot)$, in an additive form: 
\begin{align}\label{eq:JSS_decomp}
\forall x \in \mathcal{X},f(x)=\mu(x)+Hx,
\end{align}
 where $H$ is a matrix in $\mathbb{R}^{p \times n}$, that satisfies the following 
 \begin{align}\label{eq:H_decomp}
 \forall (i,j) \in \mathbb{N}_p \times \mathbb{N}_n, H_{i,j}=(\overline{J}_f)_{i,j} \ \lor \ H_{i,j}=(\underline{J}_f)_{i,j}.    
 \end{align}
\end{prop}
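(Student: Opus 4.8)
The plan is to read the decomposition \eqref{eq:JSS_decomp} as a \emph{definition} of $\mu$, namely $\mu(x) \triangleq f(x) - Hx$ for a constant matrix $H \in \mathbb{R}^{p \times n}$ still to be chosen, so that \eqref{eq:JSS_decomp} holds by construction, and then to verify that $\mu$ is JSS for an appropriate entrywise selection of $H$ from the Jacobian bounds. Since $f$ is only locally Lipschitz, I would work with the Clarke generalized Jacobian throughout. The first step is to record the additive rule: because $x \mapsto Hx$ is continuously differentiable (indeed linear) with constant generalized Jacobian $H$, the generalized Jacobian of the difference satisfies $J_\mu(x) = J_f(x) - H$ with \emph{equality} rather than mere inclusion. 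This reduces the problem to controlling the entries of $J_f(x) - H$ uniformly in $x \in \mathcal{X}$.

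Next, I would fix the entries of $H$ according to \eqref{eq:H_decomp} and read off the sign of each entry of $J_\mu(x)$ directly from the standing hypothesis $\underline{J}_f \le J_f(x) \le \overline{J}_f$, interpreted entrywise for every element of the generalized-Jacobian set at every $x \in \mathcal{X}$. Concretely, for a pair $(i,j)$: if I set $H_{i,j} = (\underline{J}_f)_{i,j}$, then $J_\mu(x)_{i,j} = J_f(x)_{i,j} - (\underline{J}_f)_{i,j} \ge 0$ for all $x$; and if I set $H_{i,j} = (\overline{J}_f)_{i,j}$, then $J_\mu(x)_{i,j} = J_f(x)_{i,j} - (\overline{J}_f)_{i,j} \le 0$ for all $x$. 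In either case the $(i,j)$ entry of $J_\mu$ keeps a fixed sign on $\mathcal{X}$. Choosing the lower bound uniformly over all $(i,j)$ yields $J_\mu(x) \ge 0$ everywhere (positive JSS), and choosing the upper bound uniformly yields $J_\mu(x) \le 0$ everywhere (negative JSS). Either selection produces a constant $H$ satisfying \eqref{eq:H_decomp} together with a $\mu$ that is JSS in the sense of Definition \ref{defn:JSS}, which establishes the claimed decomposition.

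The sign bookkeeping above is routine; the main obstacle I anticipate is the rigorous handling of the generalized Jacobian for the merely locally Lipschitz $f$. Specifically, I need the difference rule for Clarke generalized Jacobians to hold with equality when the subtracted summand is smooth, and I need the hypothesis $J_f(x) \in [\underline{J}_f,\overline{J}_f]$ to be understood as containment of the entire set-valued generalized Jacobian within the interval matrix, so that the sign conclusions hold for \emph{every} generalized-gradient selection and not only at points of differentiability. Once this nonsmooth-analysis bookkeeping is in place, the construction of $H$ and the verification of the sign-stability of $\mu$ follow immediately from the entrywise comparison, and notably no global Lipschitz or contraction assumption on $f$ is required.
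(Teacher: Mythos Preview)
Your proof is correct and follows essentially the same argument as the paper: define $\mu(x)=f(x)-Hx$, compute $J_\mu(x)=J_f(x)-H$, and use the interval bounds on $J_f$ to pin the sign of each entry of $J_\mu$. One small note: the paper reads Definition~\ref{defn:JSS} entrywise (each $(i,j)$ entry is sign-stable individually, not necessarily with a common sign across all entries), so your per-entry case analysis already shows that \emph{any} $H$ satisfying \eqref{eq:H_decomp} yields a JSS $\mu$, and the final restriction to the uniform choices $H=\underline{J}_f$ or $H=\overline{J}_f$ is unnecessary.
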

\begin{proof}
Let us define $\mu(x) \triangleq f(x)-Hx$, where $H$ is given in \eqref{eq:H_decomp}. Then, it follows from \eqref{eq:JSS_decomp} that $\forall x \in \mathcal{X},\forall (i,j) \in \mathbb{N}_p \times \mathbb{N}_n,(J_{\mu})_{i,j}(x)=(J_{f})_{i,j}(x)-H_{i,j}$. From this and given the fact that $\forall x \in \mathcal{X}, J_f(x) \in [\underline{J}_f,\overline{J}_f]$, we obtain $(\underline{J}_{f})_{i,j}-H_{i,j}\leq(J_{\mu})_{i,j}(x)\leq(\overline{J}_{f})_{i,j}-H_{i,j}$. Now, i) if \yong{we have} the case $H_{i,j}=\overline{f}_{i,j}$ in \eqref{eq:H_decomp}, 
the right inequality implies $(J_{\mu})_{i,j}(x) \leq \overline{J}_f-\overline{J}_f=0$, i.e., $\mu_i$ is monotonically increasing in the $j$'th dimension. On the other hand, ii) if \yong{we have} the case $H_{i,j}=\underline{f}_{i,j}$ in \eqref{eq:H_decomp}, 
it follows from the left inequality that $(J_{\mu})_{i,j}(x) \geq \underline{J}_f-\underline{J}_f=0$, i.e., $\mu_i$ is monotonically decreasing in the $j$'th dimension. Hence, $\mu$ is a JSS mapping, by \yong{either i) or ii).} 
\end{proof}
\begin{defn}[Mixed-Monotonicity {and} Decomposition Functions] \cite[Definition 1]{abate2020tight},\cite[Definition 4]{yang2019sufficient} \label{defn:dec_func}
Consider the dynamical system {with initial state $x_0 \in \mathcal{X}_0 \triangleq [\underline{x}_0,\overline{x}_0]$:}
\begin{align}\label{eq:mix_mon_def}
x_t^+= g(x_t),
\end{align}
where $x_t^+ \triangleq x_{t+1}$ if \eqref{eq:mix_mon_def} is a DT {system} and $x_t^+ \triangleq \dot{x}_t$ if \eqref{eq:mix_mon_def} is a CT system. Moreover, ${g}:\mathcal{X} \subset \mathbb{R}^n \to \mathbb{R}^{p}$ is the vector field 
with state $x_t \in \mathcal{X}$ \yong{as its domain}. 

Suppose \eqref{eq:mix_mon_def} is a DT system. 
Then, a mapping $g_d:\mathcal{X}\times \mathcal{X} \to \mathbb{R}^{p}$ is 
a {DT mixed-monotone} decomposition function with respect to $g$, 
if i) $g_d(x,x)=g(x)$, ii) $g_d$ is monotone increasing in its first 
argument, i.e., $\hat{x}\ge x \Rightarrow g_d(\hat{x},x') \geq g_d(x,x')$, and iii) {$g_d$ is monotone decreasing in its second argument, i.e., $\hat{x}\ge x \Rightarrow g_d(x',\hat{x}) \leq g_d(x',x).$}

\yong{On the other hand,} if \eqref{eq:mix_mon_def} is a CT system, 
a mapping $g_d:\mathcal{X}\times \mathcal{X} \to \mathbb{R}^{p}$ is a {CT mixed-monotone} decomposition function with respect to $g$, 
if i) $g_d(x,x)=g(x)$, ii) $g_d$ is monotone increasing in its first 
argument with respect to ``off-diagonal'' arguments, i.e., $\forall (i,j) \in \mathbb{N}_{p} \times \mathbb{N}_{n} \land i \ne j,\hat{x}_j\ge x_j, \hat{x}_i= x_i  \Rightarrow g_{d,i}(\hat{x},x') \geq g_{d,i}(x,x')$, and iii) {$g_d$ is monotone decreasing in its second argument, i.e., $\hat{x}\ge x \Rightarrow g_d(x',\hat{x}) \leq g_d(x',x).$}   
\end{defn}
\begin{defn}[Embedding {Systems}]\label{def:embedding}
For an $n$-dimensional 
system \eqref{eq:mix_mon_def} {with any decomposition functions ${g}_d(\cdot)$, its \emph{embedding system is the following} 
$2n$-dimensional system {with initial condition $\begin{bmatrix} \overline{x}_0^\top & \underline{x}_0^\top\end{bmatrix}^\top$}:}
\begin{align} \label{eq:embedding}
\begin{bmatrix}{(\overline{x}}_t^+)^\top & ({\underline{x}}_t^+)^\top \end{bmatrix}^\top=\begin{bmatrix} {g}^\top_d(\overline{x}_t,\underline{x}_t) & {g}^\top_d(\underline{x}_t,\overline{x}_t)\end{bmatrix}^\top.
\end{align}
\end{defn}

\begin{prop}[State Framer Property]\label{cor:embedding} \cite[Proposition 3]{khajenejad2021tight}
{Let system \eqref{eq:mix_mon_def} with initial state $x_0 \in \mathcal{X}_0 \triangleq  [\underline{x}_0,\overline{x}_0]$ be mixed-monotone with an 
 embedding system \eqref{eq:embedding} with respect to 
${g}_d$. Then, for all $t \in \yong{\mathbb{T}}$, $R^g(t,\mathcal{X}_0) \subset \mathcal{X}_t \triangleq [\underline{x}_t,\overline{x}_t]$, 
where $R^g(t,\mathcal{X}_0) \triangleq
\{\mu_g(t, x_0) \mid x_0 \in \mathcal{X}_0,  \forall t \in \yong{\mathbb{T}}\}$ is the reachable set at time $t$ of \eqref{eq:mix_mon_def} 
when initialized within $\mathcal{X}_0$, $\mu_g(t,x_0)$ is the true state trajectory function of system \eqref{eq:mix_mon_def} and $(\overline{x}_t,\underline{x}_t)$ is the solution to the embedding system \eqref{eq:embedding}, \yong{with $\mathbb{T} \in \mathbb{R}_{\ge 0}$ for CT systems and $\mathbb{T} \in \{0\} \cup \mathbb{N}$ for DT systems}. Consequently,
the system state trajectory $x_t=\mu_g(t,x_0)$ 
satisfies $\underline{x}_t \le x_t \le \overline{x}_t, {\forall t \geq 0}$, i.e., \yong{it} is \emph{framed} by $\mathcal{X}_t \triangleq [\underline{x}_t,\overline{x}_t]$.}
\end{prop}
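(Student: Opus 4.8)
The plan is to prove the trajectory-wise framing inequality $\underline{x}_t \le x_t \le \overline{x}_t$ for an arbitrary but fixed $x_0 \in \mathcal{X}_0$; since the reachable set $R^g(t,\mathcal{X}_0)$ collects exactly the true states $\mu_g(t,x_0)$ over all $x_0 \in \mathcal{X}_0$, the inclusion $R^g(t,\mathcal{X}_0) \subset \mathcal{X}_t$ is an immediate consequence. Because the meaning of $x_t^+$ in \eqref{eq:mix_mon_def} and of the embedding dynamics \eqref{eq:embedding} differs between the DT case ($x_t^+ \triangleq x_{t+1}$) and the CT case ($x_t^+ \triangleq \dot{x}_t$), I would treat the two cases separately.

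For the DT case, I would proceed by induction on $t \in \{0\} \cup \mathbb{N}$. The base case $\underline{x}_0 \le x_0 \le \overline{x}_0$ holds by hypothesis and by the prescribed initial condition of \eqref{eq:embedding}. For the inductive step, assume $\underline{x}_t \le x_t \le \overline{x}_t$. Using property (i) of Definition~\ref{defn:dec_func}, write $x_{t+1} = g(x_t) = g_d(x_t,x_t)$. Monotone increase in the first argument (property (ii)) applied with $x_t \le \overline{x}_t$, followed by monotone decrease in the second argument (property (iii)) applied with $\underline{x}_t \le x_t$, yields the chain $g_d(x_t,x_t) \le g_d(\overline{x}_t,x_t) \le g_d(\overline{x}_t,\underline{x}_t) = \overline{x}_{t+1}$. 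The symmetric chain $\underline{x}_{t+1} = g_d(\underline{x}_t,\overline{x}_t) \le g_d(x_t,\overline{x}_t) \le g_d(x_t,x_t) = x_{t+1}$ closes the induction, giving $\underline{x}_{t+1} \le x_{t+1} \le \overline{x}_{t+1}$.

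For the CT case, the pointwise induction is replaced by a forward-invariance/comparison argument on the framing errors $\overline{e}_t \triangleq \overline{x}_t - x_t$ and $\underline{e}_t \triangleq x_t - \underline{x}_t$, which satisfy $\overline{e}_0 \ge 0$ and $\underline{e}_0 \ge 0$. The goal is to show that the orthant $\{\overline{e} \ge 0,\ \underline{e} \ge 0\}$ is forward invariant under the stacked dynamics of $(\overline{x}_t, x_t, \underline{x}_t)$. The key is a boundary sign argument: at any instant where a single component, say $\overline{e}_{t,i}$, vanishes while all other error components remain nonnegative, one has $\overline{x}_{t,i} = x_{t,i}$, $\overline{x}_{t,j} \ge x_{t,j}$ for $j \ne i$, and $x_{t,k} \ge \underline{x}_{t,k}$ for all $k$. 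Then the \emph{off-diagonal} increase property (ii) of the CT decomposition (with matched $i$-th component $\overline{x}_{t,i} = x_{t,i}$) gives $g_{d,i}(\overline{x}_t,\underline{x}_t) \ge g_{d,i}(x_t,\underline{x}_t)$, and the decrease property (iii) with $x_t \ge \underline{x}_t$ gives $g_{d,i}(x_t,\underline{x}_t) \ge g_{d,i}(x_t,x_t)$; chaining these yields $\dot{\overline{e}}_{t,i} = g_{d,i}(\overline{x}_t,\underline{x}_t) - g_{d,i}(x_t,x_t) \ge 0$. The symmetric computation gives $\dot{\underline{e}}_{t,i} \ge 0$ whenever $\underline{e}_{t,i} = 0$. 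This is exactly the Kamke/quasimonotonicity condition under which the standard comparison theorem for cooperative systems guarantees that $\overline{e}_t \ge 0$ and $\underline{e}_t \ge 0$ persist for all $t \in \mathbb{R}_{\ge 0}$, i.e., $\underline{x}_t \le x_t \le \overline{x}_t$.

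The main obstacle I anticipate is precisely this CT boundary argument: unlike the DT case, where full monotonicity lets a simple pointwise induction work, the CT decomposition is only required to be monotone in its \emph{off-diagonal} first-argument components, and this is tailored so that the diagonal term drops out exactly at the moment an error component touches zero. Turning the pointwise sign computation into a rigorous invariance statement requires invoking the comparison theorem for quasimonotone (cooperative) systems rather than a direct induction. Since the result is quoted from \cite[Proposition 3]{khajenejad2021tight}, I would cite that reference for the detailed invariance argument while presenting the sign computation above as the mechanism that makes the comparison valid.
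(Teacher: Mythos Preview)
The paper does not actually prove this proposition; it is stated as a citation of \cite[Proposition 3]{khajenejad2021tight} with no accompanying argument. Your proposal therefore goes beyond what the paper itself provides, supplying a self-contained proof sketch where the paper simply defers to the reference.

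That said, your argument is the standard one and is correct. The DT induction using properties (i)--(iii) of Definition~\ref{defn:dec_func} is exactly how such framing results are established, and the CT boundary/Kamke argument is the right mechanism: the off-diagonal monotonicity in the first argument is precisely what ensures the subtangential (Nagumo-type) condition on each face $\{\overline{e}_i=0\}$ or $\{\underline{e}_i=0\}$, after which forward invariance of the nonnegative orthant for the error follows from the comparison theorem for quasimonotone systems. Your closing remark---that one would cite \cite{khajenejad2021tight} for the full invariance argument while exhibiting the sign computation---matches what the paper effectively does (only the paper omits even the computation).
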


\begin{prop}[Tight and Tractable Decomposition Functions for JSS Mappings]\label{prop:tight_decomp}
Let $\mu:\mathcal{X} \subset \mathbb{R}^n \to \mathbb{R}^p$ be a JSS mapping on its domain. Then, it admits a tight decomposition function that has the following form for any \mo{ordered} \yong{$x_1, x_2 \in \mathcal{X}$:} 
\begin{align}\label{eq:JJ_decomp}
\forall i \in \mathbb{N}_p, \mu_{d,i}({x}_1,{x}_2)\hspace{-.1cm}=\hspace{-.1cm}\mu_i(D_i{x}_1\hspace{-.1cm}-\hspace{-.1cm}(I_n\hspace{-.1cm}-\hspace{-.1cm}D_i){x}_2), 
\end{align}
where $D_i \in \mathbb{D}_n$ is a binary diagonal matrix determined by which vertex of the interval $[{x}_2,{x}_1]$ \mo{or $[x_1,x_2]$} that maximizes \yong{(if $x_2 \leq x_1$) or minimizes (if $x_2 > x_1$)} the JSS function $\mu_i(\cdot)$ and can be computed as follows:
\begin{align}\label{eq:Dj}
D_i=\textstyle{\mathrm{diag}}(\max(\textstyle{\mathrm{sgn}}(\yong{\overline{J}_{\mu,i}}),\mathbf{0}_{1,n})).
\end{align}
\end{prop}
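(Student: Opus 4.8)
The plan is to show that the closed form in \eqref{eq:JJ_decomp} coincides with the \emph{tightest} mixed-monotone decomposition function of $\mu$, whose existence and whose three defining properties in Definition \ref{defn:dec_func} are guaranteed by the constructions cited there. Coordinatewise, that tight decomposition is the box extremum $\mu_{d,i}(x_1,x_2)=\min_{z\in[x_1,x_2]}\mu_i(z)$ when $x_1\le x_2$ and $\mu_{d,i}(x_1,x_2)=\max_{z\in[x_2,x_1]}\mu_i(z)$ when $x_2\le x_1$, which is why the hypothesis restricts attention to \emph{ordered} pairs $x_1,x_2$. First I would use the JSS hypothesis to pin down the monotonicity of each scalar component: by Definition \ref{defn:JSS}, each entry $(J_\mu(z))_{i,j}$ keeps a fixed sign on all of $\mathcal{X}$, so $\mu_i$ is nondecreasing in $z_j$ precisely when $(\overline{J}_{\mu,i})_j\ge 0$ and nonincreasing precisely when $(\overline{J}_{\mu,i})_j<0$. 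This is exactly the data stored in $D_i$ via \eqref{eq:Dj}, since $\max(\mathrm{sgn}(\cdot),0)$ returns $(D_i)_{j,j}=1$ on the nondecreasing directions and $(D_i)_{j,j}=0$ on the nonincreasing ones.

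Second, I would locate the extremizing vertex explicitly. A map that is coordinatewise monotone with a fixed direction attains its extremum over the box spanned by $x_1$ and $x_2$ at the vertex that, in each coordinate $j$, selects the endpoint dictated by the monotonicity direction. In the minimization case $x_1\le x_2$ the minimizer takes value $(x_1)_j$ wherever $\mu_i$ increases and $(x_2)_j$ wherever it decreases; in the maximization case $x_2\le x_1$ the maximizer obeys the \emph{same} selection rule, because there $(x_1)_j$ is the larger endpoint on increasing directions and $(x_2)_j$ the smaller one on decreasing directions. Hence a single matrix $D_i$ serves both orderings, and in each case the optimal vertex is $z^\star=D_ix_1+(I_n-D_i)x_2$; substituting $z^\star$ into $\mu_i$ reproduces \eqref{eq:JJ_decomp}.

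Third, I would confirm directly that the closed form is a bona fide, and tight, decomposition function. Property (i) is immediate, since at $x_1=x_2=x$ the argument collapses to $D_ix+(I_n-D_i)x=x$, giving $\mu_{d,i}(x,x)=\mu_i(x)$. Monotone increase in the first argument and decrease in the second follow from composing the coordinatewise monotonicity of $\mu_i$ with the fact that raising a coordinate of $x_1$ perturbs $z^\star$ only along nondecreasing directions of $\mu_i$, whereas raising a coordinate of $x_2$ perturbs it only along nonincreasing directions. Tightness is then inherited from the min/max characterization, which is the tightest admissible decomposition by the results cited in Definition \ref{defn:dec_func}.

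The step I expect to be the main obstacle is the careful handling of the vertex-selection rule, in two respects. First, one must verify rigorously that the ordering-independent matrix $D_i$ in \eqref{eq:Dj} picks out the minimizer when $x_1\le x_2$ and the maximizer when $x_2\le x_1$ simultaneously; this hinges on the selection rule being invariant under swapping the roles of $x_1$ and $x_2$, which I sketched above but which deserves a clean monotonicity argument. Second, and more delicate, are the degenerate directions with $(\overline{J}_{\mu,i})_j=0$: the convention $\max(\mathrm{sgn}(0),0)=1$ forces $(D_i)_{j,j}=1$ and thus treats such a direction as nondecreasing, so I would need either to argue that a vanishing Jacobian upper bound forces $\mu_i$ to be constant in $z_j$ (under the tightest Jacobian bounds), making either endpoint extremal, or to add this as a mild standing assumption so that properties (ii)--(iii) and tightness are not compromised on the boundary.
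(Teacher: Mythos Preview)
Your proposal is correct and follows essentially the same route as the paper: both identify the tight decomposition as the coordinatewise box extremum $\max/\min_{z}\mu_i(z)$, use the JSS hypothesis to locate the extremizing vertex coordinate by coordinate, and then check that the $D_i$ in \eqref{eq:Dj} selects precisely that vertex. Your treatment is in fact somewhat more thorough---you directly verify properties (i)--(iii) of Definition~\ref{defn:dec_func} and flag the degenerate $(\overline{J}_{\mu,i})_j=0$ case---and you correctly write the extremizing vertex as $D_ix_1+(I_n-D_i)x_2$, which is the form the paper itself uses later in the proof of Lemma~\ref{lem:func_increment} (the minus sign in \eqref{eq:JJ_decomp} appears to be a typo).
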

\begin{proof}
First, by \cite[corollary 2]{khajenejad2021tight}, the JSS function $\mu(\cdot)$ admits a tight decomposition function. \yong{Then,} if $x_2 \leq x_1$, \yong{$\lambda(\cdot)$} can be computed by solving the following nonlinear optimization (cf. \cite[(7)]{khajenejad2021tight}): $\mu_{d,i}(x_1,x_2)=\max_{z \in [x_2,x_1]}\mu_i(z)$. Suppose that $z^* \in [x_2,x_1]$ is the maximizer  and consider the case that i) $\mu_i$ is positive JSS in dimension $j$. Obviously $z^*_j=x_{1,j}$. On the other hand, $(J_\mu)_{i,j} \geq 0$, and so, \yong{$\max(\textstyle{\mathrm{sgn}}((J_\mu)_{i,j}),0)=1$}. Hence, by \eqref{eq:Dj}, $(D_i)_{j,j}=1$, and therefore in \eqref{eq:JJ_decomp}, $(D_i{x}_1-(I_n-D_i){x}_2)_j=x_{1,j}$, which is consistent with what we obtained for the maximizer's $j$'th entry, i.e., $z^*_j=x_{1,j}$. Similar reasoning shows that such consistency \yong{also} holds in the negative JSS case as well \yong{as when finding the minimizer of $\mu_i(\cdot)$ if $x_2>x_1$}. 

Further, 
since $\mu(\cdot)$ is JSS, $J_{\mu,i}$ does not change signs 
and hence, \yong{$\max(\textstyle{\mathrm{sgn}}(J_{\mu,i}),0)$} is well-defined \yong{and we can equivalently use $\max(\textstyle{\mathrm{sgn}}(J_{\mu,j}),0) = \max(\textstyle{\mathrm{sgn}}(\overline{J}_{\mu,j}),0)$}. 
\end{proof}

\section{Problem Formulation} \label{sec:Problem}
\noindent\textbf{\emph{System Assumptions.}} 
Consider the following nonlinear  continuous-time (CT) or discrete-time (DT) system:  
\begin{align} \label{eq:system}
\begin{array}{ll}
\mathcal{G}: \begin{cases} {x}_t^+ = \hat{f}(x_t,u_t) \triangleq f(x_t),   \\
                                              y_t = \hat{h}(x_t,u_t) \triangleq h(x_t), 
                                              \end{cases}, x_t\in \mathcal{X}, t \in \yong{\mathbb{T}},
\end{array}\hspace{-0.2cm}
\end{align}
where $x_t^+=\dot{x}_t, \yong{\mathbb{T}} = \mathbb{R}_{\ge 0}$ if $\mathcal{G}$ is a CT and $x_t^+=x_{t+1}, \yong{\mathbb{T}}= \{0\}\cup \mathbb{N}$, if $\mathcal{G}$ is a DT system. Moreover, $x_t \in \mathcal{X} \subset \mathbb{R}^n$, $u_t \in \mathbb{R}^s$ and $y_t \in \mathbb{R}^l$ are continuous state, known (control) input and output (measurement) signals. Furthermore, $\hat{f}:\mathbb{R}^n \times \mathbb{R}^s \to \mathbb{R}^n$ and $\hat{h}:\mathbb{R}^n \times \mathbb{R}^s \to \mathbb{R}^l$ are nonlinear state vector field and observation/constraint \yong{functions/}mappings, respectively, from which, 
the \yong{functions/}mappings $f:\mathbb{R}^n \to \mathbb{R}^n$ and $g:\mathbb{R}^n \to \mathbb{R}^l$ are well-defined \yong{since the input signal $u_t$ is known}.  
We are interested in estimating the trajectories
of the plant $\mathcal{G}$ in \eqref{eq:system}, when they are initialized in a given interval
$\mathcal{X}_0 \subset \mathcal{X} \subset \mathbb{R}^n$.
We 
also assume the following: 
\begin{assumption} \label{ass:initial_interval}
 The initial state $x_0$ satisfies $x_0 \in \mathcal{X}_0 = [ \underline{x}_0,\overline{x}_0]$, where $\underline{x}_0$ and $\overline{x}_0$ {are} known initial state bounds. 
 \end{assumption}
 \begin{assumption}\label{ass:mixed_monotonicity}
 The 
 \yong{mappings} $f(\cdot)$ and $h(\cdot)$ are known, \mk{differentiable}, locally Lipschitz\mk{\footnote{\mk{Both assumptions of locally Lipschitz continuity and differentiability can be relaxed to a much weaker continuity assumption (cf. \cite{khajenejad2021tight} for more details), and these assumptions are mainly made for ease of exposition.}}} and mixed-monotone in their domain with priori known upper and lower bounds for their Jacobian matrices, $\overline{J}_{f},\underline{J}_{f} \in \mathbb{R}^{n \times n}$ and $\overline{J}_{h},\underline{J}_{h} \in \mathbb{R}^{l \times n}$, respectively.
\end{assumption}
\begin{assumption}\label{ass:known_input_output}
 The values of the input $u_t$ and output/measurement $y_t$ signals are known at all times. 
\end{assumption}
Further, we formally define the notions of \emph{framers}, \emph{correctness} and \emph{stability} that are used throughout the paper. 
\begin{defn}[Correct Interval \mk{Framers}]\label{defn:framers}
Suppose Assumptions \ref{ass:initial_interval}--\ref{ass:known_input_output} hold. Given the nonlinear plant \eqref{eq:system}, 
the mappings/signals $\overline{x},\underline{x}: \mo{\mathbb{T}} \to \mathbb{R}^n$ are called upper and lower framers for the states of System \eqref{eq:system}, if 
\begin{align}\label{eq:correctness}
\forall t \in \yong{\mathbb{T}}, \ \underline{x}_t \leq x_t \leq \overline{x}_t.
\end{align}
In other words, starting from the initial interval $\underline{x}_0 \leq x_0 \leq \overline{x}_0$, the true state of the system in \eqref{eq:system}, $x_t$, is guaranteed to evolve within the interval flow-pipe $[\underline{x}_t,\overline{x}_t]$, for all $t \in \yong{\mathbb{T}}$. Finally, any dynamical system whose states are correct framers for the states of the plant $\mathcal{G}$, i.e., any (tractable) algorithm that returns upper and lower framers for the states of plant $\mathcal{G}$ is called a \emph{correct} interval \mk{framer} for system \eqref{eq:system}. 
\end{defn}
\begin{defn}[\mk{Framer} Error]\label{defn:error}
Given 
state framers \mk{$\underline{x}_t \leq \overline{x}_t$}, $\varepsilon : \mo{\mathbb{T}} \to \mathbb{R}^n$, \mk{which denotes} the interval width \mk{of $[\underline{x}_t,\overline{x}_t]$ (cf. Definition \ref{defn:interval})},  
 is called the \mk{framer} error. It can be easily verified that 
correctness (cf. Definition \ref{defn:framers}) implies that $\varepsilon_t \geq 0, \forall t \in \mo{\mathbb{T}}.$  
\end{defn}
\begin{defn}[Stability and \mk{Interval Observer}]\label{defn:stability}
An interval \mk{framer} is stable, if the \mk{framer} error (cf. Definition \ref{defn:error}) asymptotically converges to zero, i.e., 
$\lim_{t \to \infty} \|\varepsilon _t\|=0$. \mk{A stable interval framer is called an interval observer.}
\end{defn}

The 
observer design problem 
 can be stated as follows:
\begin{problem}\label{prob:SISIO}
Given the nonlinear system in \eqref{eq:system}, as well as Assumptions \ref{ass:initial_interval}--\ref{ass:known_input_output}, 
synthesize an interval observer, \mk{i.e., a} correct and stable \mk{framer} (cf. Definitions \ref{defn:framers} and \ref{defn:stability}). 
\end{problem}

\section{Proposed Interval Observer} \label{sec:observer}
\subsection{Interval Observer Design} \label{sec:obsv}
Given the nonlinear plant $\mathcal{G}$, in order to address Problem \ref{prob:SISIO}, we propose an interval observer (cf. Definition \ref{defn:framers}) for $\mathcal{G}$ through the following dynamical system:
\begin{align}\label{eq:observer}
\hat{\mathcal{G}}:\begin{cases}\overline{x}_t^+\hspace{-.4cm}&=(A\hspace{-.1cm}-\hspace{-.1cm}LC)^\uparrow \overline{x}_t-(A\hspace{-.1cm}-\hspace{-.1cm}LC)^\downarrow \underline{x}_t\hspace{-.1cm}+\hspace{-.1cm}Ly_t \\
&+\phi_d(\overline{x}_t,\underline{x}_t)\hspace{-.1cm}-\hspace{-.1cm}L^+\psi_d(\underline{x}_t,\overline{x}_t)\hspace{-.1cm}+\hspace{-.1cm}L^-\psi_d(\overline{x}_t,\underline{x}_t) \\
\underline{x}_t^+\hspace{-.4cm}&=(A\hspace{-.1cm}-\hspace{-.1cm}LC)^\uparrow \underline{x}_t-(A\hspace{-.1cm}-\hspace{-.1cm}LC)^\downarrow \overline{x}_t\hspace{-.1cm}+\hspace{-.1cm}Ly_t \\
&+\phi_d(\underline{x}_t,\overline{x}_t)\hspace{-.1cm}-\hspace{-.1cm}L^+\psi_d(\overline{x}_t,\underline{x}_t)\hspace{-.1cm}+\hspace{-.1cm}L^-\psi_d(\underline{x}_t,\overline{x}_t) 
 \end{cases},
\end{align}
where if $\mathcal{G}$ is a CT system, then
\begin{align}\label{eq:T_CT}
\begin{array}{rl}
\overline{x}_t^+\hspace{-.2cm}&\triangleq \dot{\overline{x}_t},(A-LC)^\uparrow \triangleq (A-LC)^\text{d}+(A-LC)^{\text{nd}+},\\
\underline{x}_t^+\hspace{-.2cm}&\triangleq \dot{\underline{x}_t},(A-LC)^\downarrow \triangleq (A-LC)^{\text{nd}-},
\end{array}
\end{align}
and if $\mathcal{G}$ is a DT system, then
\begin{align}\label{eq:T_DT}
\begin{array}{rl}
\overline{x}_t^+&\triangleq \overline{x}_{t+1},(A-LC)^\uparrow \triangleq (A-LC)^+,\\
\underline{x}_t^+&\triangleq \underline{x}_{t+1},(A-LC)^\downarrow \triangleq (A-LC)^{-}.
\end{array}
\end{align}
Moreover, $A \in \mathbb{R}^{n \times n}$ and $C \in \mathbb{R}^{l \times n}$ are chosen such that the following decompositions hold (cf. Definition \ref{defn:JSS} and Proposition \ref{prop:JSS_decomp}):
\begin{align} \label{eq:JSS_decom}
\hspace{-.2cm}\forall x \in \mathcal{X}: \begin{cases}f(x)=Ax+\phi(x) \\ h(x)=Cx+\psi(x) \end{cases} \hspace{-.4cm} s.t. \ \phi,\psi \ \text{are JSS in} \ \mathcal{X}.
\end{align}  
Furthermore, $\phi_d:\mathbb{R}^n \times \mathbb{R}^n \to \mathbb{R}^n$ and $\psi_d:\mathbb{R}^n \times \mathbb{R}^n \to \mathbb{R}^l$ are tight mixed-monotone decomposition functions of $\phi$ and $\psi$, respectively (cf. Definition \ref{defn:dec_func} and Propositions \ref{cor:embedding}--\ref{prop:tight_decomp}). Finally, $L \in \mathbb{R}^{n \times l}$ is the observer gain matrix, designed via Theorem \ref{thm:stability}, such that the proposed observer $\hat{\mathcal{G}}$ possesses the desired properties discussed in the following subsections.
\subsection{Observer Correctness (Framer Property)}\vspace{-0.05cm}
Our strategy is to design a \emph{correct by construction} interval observer for plant $\mathcal{G}$. To accomplish this goal, first, note that from \eqref{eq:observer} and \eqref{eq:JSS_decom}
we have $y_t-Cx_t-\psi(x_t)=0$, and so $L(y_t-Cx_t-\psi(x_t))=0$, for any $L \in \mathbb{R}^{n \times l}$. Adding this ``zero" term to the right hand side of \eqref{eq:system} \yong{and applying} 
\eqref{eq:JSS_decom}
yield the following equivalent system to $\mathcal{G}$:  
\begin{align} \label{eq:eqiv_sys}
x_t^+=(A-LC)x_t+Ly_t+\phi(x_t)-L\psi(x_t).
\end{align}
From now on, we are interested in computing embedding systems, in the sense of Definition \ref{def:embedding}, for the system in \eqref{eq:eqiv_sys}, so that by Proposition \ref{cor:embedding}, the state trajectories of \eqref{eq:eqiv_sys} are ``framed" by the state trajectories of the computed embedding system. To do so, we split the right hand side of \eqref{eq:eqiv_sys} (except for $Ly_t$ that is independent \yong{of the states}) into two constituent systems: the linear constituent $(A-LC)x_t$ and the nonlinear constituent, $\phi(x_t)-L\psi(x_t)$. Then, we compute embedding systems for each constituent, separately. Finally, we add the computed embedding systems to construct an embedding system for \eqref{eq:eqiv_sys}. We start with framing the linear constituent through the following lemma.
\begin{lem}[Linear Embedding]\label{lem:linear_bounding}
Consider a dynamical system ${\mathcal{G}}_{\ell}$ in the form of \eqref{eq:mix_mon_def}, with domain $\mathcal{X}$ and state equation ${f}_{\ell}(x_t)=(A-LC)x_t$. Then, a tight decomposition function (cf. Definition \ref{defn:dec_func}) for ${\mathcal{G}}_{\ell}$ can be computed as follows:
\begin{align} \label{eq:Lin_dec}
\tilde{f}_{\ell d}(x_1,x_2)=(A-LC)^\uparrow x_1-(A-LC)^\downarrow x_2,
\end{align}
where $(A-LC)^\uparrow $ and $(A-LC)^\downarrow$ are given in \eqref{eq:T_CT} and \eqref{eq:T_DT} for CT and DT systems, respectively. 
\end{lem}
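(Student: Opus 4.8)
The plan is to set $M \triangleq A-LC$ and verify directly that the proposed $\tilde{f}_{\ell d}$ in \eqref{eq:Lin_dec} satisfies the three defining properties of a (CT or DT) mixed-monotone decomposition function in Definition \ref{defn:dec_func}, and then to argue that it is in fact the tightest such function. Throughout, I would treat the CT and DT cases in parallel, exploiting the elementary identities $M=M^+-M^-$ in the DT case and $M=M^\text{d}+M^{\text{nd}+}-M^{\text{nd}-}$ in the CT case, which follow directly from the definitions of $(\cdot)^\uparrow$ and $(\cdot)^\downarrow$ in \eqref{eq:T_DT} and \eqref{eq:T_CT}, respectively.

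First I would establish the diagonal-consistency property $\tilde{f}_{\ell d}(x,x)=f_\ell(x)$. Evaluating \eqref{eq:Lin_dec} at $x_1=x_2=x$ gives $\big((A-LC)^\uparrow-(A-LC)^\downarrow\big)x$, and the two identities above show this bracket equals $M=A-LC$, so $\tilde{f}_{\ell d}(x,x)=Mx=f_\ell(x)$ in both cases. Next, since $\tilde{f}_{\ell d}$ is affine in each argument, its partial Jacobians are the constant matrices $\partial_{x_1}\tilde{f}_{\ell d}=(A-LC)^\uparrow$ and $\partial_{x_2}\tilde{f}_{\ell d}=-(A-LC)^\downarrow$. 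For the DT case, $(A-LC)^\uparrow=M^+\ge 0$ and $(A-LC)^\downarrow=M^-\ge 0$ elementwise, which immediately yields monotone increase in the first argument and monotone decrease in the second, i.e., properties (ii)--(iii). The CT case is where the only genuine subtlety arises: $(A-LC)^\uparrow=M^\text{d}+M^{\text{nd}+}$ retains the (possibly negative) diagonal of $M$ unclipped, but its off-diagonal entries are exactly $M^{\text{nd}+}\ge 0$; since the CT notion only requires monotone increase of the first argument with respect to the off-diagonal arguments (cf. Definition \ref{defn:dec_func}), this is precisely what is needed, while $(A-LC)^\downarrow=M^{\text{nd}-}\ge 0$ again gives decrease in the second argument.

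It then remains to show tightness. Here I would note that $f_\ell(x)=Mx$ has a constant Jacobian $M$, so each of its entries is trivially sign-definite over $\mathcal{X}$ and $f_\ell$ is (degenerately) JSS; consequently the tight decomposition characterization (Proposition \ref{prop:tight_decomp} and \cite[Cor.~2, (7)]{khajenejad2021tight}) applies. For ordered arguments $x_2\le x_1$, the tight value of the $i$-th component is the box optimum $\max_{z\in[x_2,x_1]}(Mz)_i=\max_{z}\sum_j M_{ij}z_j$ in the DT case, which is separable and affine in each $z_j$ and is therefore maximized by selecting $z_j=x_{1,j}$ wherever $M_{ij}\ge 0$ and $z_j=x_{2,j}$ wherever $M_{ij}<0$; collecting terms gives exactly $M^+x_1-M^-x_2$, matching \eqref{eq:Lin_dec}. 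The CT tight value is the same box optimum but with the diagonal coordinate $z_i$ pinned to $x_{1,i}$, which reproduces the unclipped diagonal term $M^\text{d}x_1$ and hence $(M^\text{d}+M^{\text{nd}+})x_1-M^{\text{nd}-}x_2$. The reverse ordering $x_1<x_2$ follows symmetrically with $\max$ replaced by $\min$, and because the optimizer is still attained at a vertex the identical closed form results.

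The step I expect to require the most care is the tightness argument in the CT setting: one must keep the diagonal and off-diagonal bookkeeping of the positive and negative parts straight and verify that fixing the diagonal coordinate $z_i=x_{1,i}$ in the box optimization is exactly consistent with retaining $M^\text{d}$ unclipped in $(A-LC)^\uparrow$. The verification of properties (i)--(iii) is otherwise routine once the two defining identities for $M$ are in hand.
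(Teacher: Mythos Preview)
Your proposal is correct and follows essentially the same approach as the paper: direct verification of the three defining properties in Definition~\ref{defn:dec_func} via the sign structure of $(A-LC)^\uparrow$ and $(A-LC)^\downarrow$ in the DT and CT cases. The only difference is that for tightness the paper simply defers to \cite[Lemma~1]{khajenejad2021simultaneousECC}, whereas you spell out the separable box-optimization argument explicitly; your version is self-contained and the CT diagonal-pinning detail is handled correctly.
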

\begin{proof} We start with the DT case, where $(A-LC)^\uparrow \triangleq (A-LC)^+,(A-LC)^\downarrow \triangleq(A-LC)^-$. It is easy to verify that $\tilde{\mo{f}}_{\ell d}$ is increasing in $x_1$ since $(A-LC)^+ \geq 0$, is decreasing in $x_2$ since $-(A-LC)^- \leq 0$, and $\tilde{\mo{f}}_{\ell d}(x,x)=((A-LC)^+-(A-LC)^-)x=(A-LC)x={\mo{f}}_{\ell}(x)$. Hence, $\tilde{\mo{f}}_{\ell d}$ is a DT decomposition function of $\tilde{\mo{f}}$. The proof for tightness goes through similar lines of the proof of \cite[Lemma 1]{khajenejad2021simultaneousECC}. As for the CT case, where $(A-LC)^\uparrow \triangleq (A-LC)^{\text{d}}+(A-LC)^{\text{nd}+},(A-LC)^\downarrow \triangleq(A-LC)^{\text{nd}-}$, the proof is similar to the one for the DT case, with the slight difference that in the CT case, we need increasing monotonicity of $\tilde{\mo{f}}_{\ell d}$ only in off-diagonal elements of $x_1$ (cf. Definition \ref{defn:dec_func}), which is guaranteed by non-negativity of $(A-LC)^{\text{nd}+}$.         
\end{proof}   
Next, we compute an embedding system for the nonlinear constituent system in \eqref{eq:eqiv_sys}, i.e., $\phi(x_t)-L\psi(x_t)$, as follows.
\begin{lem}[Nonlinear Embedding]\label{lem:nonlinear_bounding}
Consider a dynamical system ${\mathcal{G}}_{{\nu}}$ in the form of \eqref{eq:mix_mon_def}, with domain $\mathcal{X}$ and state equation ${f}_{\nu}(x_t)=\phi(x_t)-L\psi(x_t)$. Then, a decomposition function (cf. Definition \ref{defn:dec_func}) for ${\mathcal{G}}_{\nu}$ can be computed as follows:
\begin{align}\label{eq:Lin_dec}
\hspace{-.2cm}{f}_{\nu d}(x_1,x_2\hspace{-.05cm})\hspace{-.1cm}= \hspace{-.1cm}\phi_d({x}_1,{x}_2\hspace{-.05cm})\hspace{-.1cm}-\hspace{-.1cm}L^+\psi_d({x}_2,{x}_1\hspace{-.05cm})\hspace{-.1cm}+\hspace{-.1cm}L^-\psi_d({x}_1,{x}_2),
\end{align} 
where $\phi_d(\cdot,\cdot),\psi_d(\cdot,\cdot)$ are tight decomposition functions for the JSS mapping $\phi(\cdot),\psi(\cdot)$, computed via Proposition \ref{prop:tight_decomp}. 
\end{lem}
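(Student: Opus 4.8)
The plan is to verify directly the three defining properties of a mixed-monotone decomposition function from Definition \ref{defn:dec_func}, relying entirely on the already-established structure of the two building blocks $\phi_d$ and $\psi_d$. By Proposition \ref{prop:tight_decomp}, since $\phi$ and $\psi$ are JSS on $\mathcal{X}$, the maps $\phi_d$ and $\psi_d$ are tight mixed-monotone decomposition functions; in particular each satisfies the diagonal-consistency identity $\phi_d(x,x)=\phi(x)$ and $\psi_d(x,x)=\psi(x)$, is monotone increasing in its first argument, and is monotone decreasing in its second argument. The only other ingredient I would invoke is that $L^+ \triangleq \max(L,\mathbf{0}_{n,l})$ and $L^- \triangleq L^+ - L$ are entrywise nonnegative matrices by construction, with $L = L^+ - L^-$. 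The consistency property (i) is then immediate: evaluating \eqref{eq:Lin_dec} on the diagonal gives $f_{\nu d}(x,x)=\phi_d(x,x)-L^+\psi_d(x,x)+L^-\psi_d(x,x)=\phi(x)-(L^+-L^-)\psi(x)=\phi(x)-L\psi(x)=f_\nu(x)$.

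Next I would establish the monotonicity properties by a term-by-term sign analysis. For property (ii), fix the second argument as $x'$, take $\hat{x}\ge x$, and compare the three summands of $f_{\nu d}(\hat{x},x')$ with those of $f_{\nu d}(x,x')$: we have $\phi_d(\hat{x},x')\ge\phi_d(x,x')$ because $\phi_d$ increases in its first slot; $-L^+\psi_d(x',\hat{x})\ge -L^+\psi_d(x',x)$ because $\psi_d$ decreases in its second slot and the nonnegative matrix $L^+$ preserves the inequality after negation; and $L^-\psi_d(\hat{x},x')\ge L^-\psi_d(x,x')$ because $\psi_d$ increases in its first slot and $L^-\ge 0$. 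Summing yields $f_{\nu d}(\hat{x},x')\ge f_{\nu d}(x,x')$. Property (iii) follows by the mirror-image argument: now the second argument of $f_{\nu d}$ feeds the \emph{first} slot of $\psi_d$ in the $-L^+$ term and the \emph{second} slot of $\psi_d$ in the $L^-$ term, so the increasing/decreasing properties of $\psi_d$, passed through the nonnegative gains $L^\pm$, again give the correct (decreasing) sign in each summand.

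The main obstacle — indeed the only place where care is required — is precisely this argument-swapping bookkeeping: one must confirm that pairing $\psi_d(x_2,x_1)$ with $L^+$ and $\psi_d(x_1,x_2)$ with $L^-$ is exactly what converts the monotonicity of $\psi_d$ through the nonnegative matrices $L^\pm$ into the required global monotonicity of $f_{\nu d}$; getting either the swap or the $L^+$/$L^-$ assignment wrong would break a sign. I would finally remark that proving full monotonicity in the first argument is stronger than, and therefore subsumes, the weaker ``off-diagonal'' requirement needed in the CT case (cf. Definition \ref{defn:dec_func}), so this single argument covers both the CT and DT settings at once.
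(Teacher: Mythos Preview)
Your proposal is correct and follows essentially the same approach as the paper: both verify the three defining properties of a decomposition function directly, using the diagonal consistency of $\phi_d,\psi_d$, their monotonicity in each argument, and the entrywise nonnegativity of $L^+$ and $L^-$ to push the inequalities through term by term. Your write-up is somewhat more explicit (in particular your closing remark that full monotonicity in the first argument subsumes the weaker CT off-diagonal requirement is a nice addition), but the underlying argument is identical.
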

\begin{proof}
${f}_{\nu d}$ is increasing in $x_1$ since it is a summation of three increasing mappings in $x_1$, including $\phi_d(x_1,x_2)$ (a decomposition function that by construction is increasing in $x_1$), $-L^+\psi_d({x}_2,{x}_1)$ (a multiplication of the non-positive matrix $-L^+$ and the decomposition function $\psi_d({x}_2,{x}_1)$ which is decreasing on $x_1$ by construction) and $L^-\psi_d({x}_1,{x}_2)$ (a multiplication of the non-negative matrix $L^-$ and the decomposition function $\psi_d({x}_1,{x}_2)$ which is itself increasing on $x_1$ by construction). Similar reasoning shows that ${g}_{\nu d}$ is decreasing in $x_2$. Finally, ${f}_{\nu d}(x,x)=\phi_d({x},{x})-L^+\psi_d({x},{x})+L^-\psi_d({x},{x})=\phi(x)-L\psi(x)=f_{\nu}(x)$. 
\end{proof}
We conclude this subsection by combining the results in Lemmas \ref{lem:linear_bounding} and \ref{lem:nonlinear_bounding}, as well as Proposition \ref{cor:embedding}, that results in the following theorem on correctness of the proposed observer.
\begin{thm}[\mk{Correct Interval Framer}]\label{lem:correctness}
Consider the nonlinear plant $\mathcal{G}$ in \eqref{eq:system} and suppose Assumptions \ref{ass:initial_interval}--\ref{ass:known_input_output} hold. Then, the dynamical system $\hat{\mathcal{G}}$ in \eqref{eq:observer} constructs a correct interval \mk{framer} for the nonlinear plant $\mathcal{G}$. In other words, $\forall t \in \mo{\mathbb{T}}, \underline{x}_t \leq x_t \leq \overline{x}_t$, where $x_t$ and $[\overline{x}_t^\top \underline{x}_t^\top]^\top$ are the state vectors in $\mathcal{G}$ and $\hat{\mathcal{G}}$ at time $t \in \mo{\mathbb{T}}$, respectively. 
\end{thm}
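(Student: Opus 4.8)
The plan is to recognize the proposed observer $\hat{\mathcal{G}}$ in \eqref{eq:observer} as exactly the embedding system (in the sense of Definition \ref{def:embedding}) associated with a mixed-monotone decomposition function of the equivalent plant \eqref{eq:eqiv_sys}, and then to invoke the State Framer Property (Proposition \ref{cor:embedding}) to conclude correctness. First I would recall that, since $y_t = h(x_t)$ along the true trajectory, adding the identically-zero term $L(y_t - Cx_t - \psi(x_t))$ to the dynamics of $\mathcal{G}$ produces the trajectory-equivalent system \eqref{eq:eqiv_sys}, whose (time-varying) vector field splits additively into the linear constituent $(A-LC)x_t$, the nonlinear constituent $\phi(x_t) - L\psi(x_t)$, and the known, state-independent signal $Ly_t$.

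The key structural fact I would establish is that mixed-monotone decomposition functions are \emph{additive}: if $f = f_1 + f_2$ and $f_{1d}, f_{2d}$ are decomposition functions of $f_1, f_2$ respectively, then $f_{1d} + f_{2d}$ is a decomposition function of $f$. This is immediate from Definition \ref{defn:dec_func}, because monotonicity in the first argument (in all components for DT, in the off-diagonal components for CT) and anti-monotonicity in the second argument are both preserved under addition, while $f_{1d}(x,x) + f_{2d}(x,x) = f_1(x) + f_2(x) = f(x)$ recovers the diagonal-consistency condition. The known term $Ly_t$ is trivially its own decomposition function at each fixed $t$, as it satisfies all monotonicity requirements vacuously and the diagonal identity $Ly_t = Ly_t$.

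With this in hand, I would combine Lemma \ref{lem:linear_bounding} (supplying $\tilde{f}_{\ell d}(x_1,x_2) = (A-LC)^\uparrow x_1 - (A-LC)^\downarrow x_2$ for the linear part) and Lemma \ref{lem:nonlinear_bounding} (supplying $f_{\nu d}(x_1,x_2) = \phi_d(x_1,x_2) - L^+\psi_d(x_2,x_1) + L^-\psi_d(x_1,x_2)$ for the nonlinear part), together with $Ly_t$, to form $g_d(x_1,x_2) \triangleq \tilde{f}_{\ell d}(x_1,x_2) + Ly_t + f_{\nu d}(x_1,x_2)$, which by additivity is a mixed-monotone decomposition function of the vector field of \eqref{eq:eqiv_sys}. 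Writing out the embedding dynamics $\overline{x}_t^+ = g_d(\overline{x}_t,\underline{x}_t)$ and $\underline{x}_t^+ = g_d(\underline{x}_t,\overline{x}_t)$ of Definition \ref{def:embedding}, I would then verify by a direct term-by-term comparison that they coincide exactly with the upper- and lower-framer equations of $\hat{\mathcal{G}}$ in \eqref{eq:observer}. Thus $\hat{\mathcal{G}}$ is precisely the embedding system of \eqref{eq:eqiv_sys}, and applying Proposition \ref{cor:embedding} with initial condition $[\overline{x}_0^\top\ \underline{x}_0^\top]^\top$ (consistent with Assumption \ref{ass:initial_interval}) yields $\underline{x}_t \leq x_t \leq \overline{x}_t$ for all $t \in \mathbb{T}$, as claimed.

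I expect the main obstacle to be the careful treatment of the CT case within the additivity argument. There, first-argument monotonicity of a decomposition function is required only on off-diagonal components, so I must check that the linear piece $\tilde{f}_{\ell d}$ — whose diagonal is carried by $(A-LC)^\text{d}$ and need not be sign-definite — still combines with the nonlinear piece into a valid CT decomposition function; this is exactly why the diagonal of $A-LC$ is retained intact in $(A-LC)^\uparrow$ rather than being split into positive and negative parts, so that off-diagonal monotonicity (guaranteed by $(A-LC)^{\text{nd}+} \geq 0$) is preserved under the sum. A secondary point to keep straight is that the diagonal identity $g_d(x,x) = f(x)$ holds along the true trajectory precisely because $y_t = h(x_t)$ makes the added term vanish; the remaining verifications are routine.
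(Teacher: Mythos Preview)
Your proposal is correct and follows essentially the same approach as the paper: both argue that the sum of decomposition functions is a decomposition function of the sum, combine Lemmas~\ref{lem:linear_bounding} and~\ref{lem:nonlinear_bounding} (plus the state-independent term $Ly_t$) to exhibit $\hat{\mathcal{G}}$ as the embedding system of \eqref{eq:eqiv_sys}, and then invoke Proposition~\ref{cor:embedding}. Your treatment is in fact more careful than the paper's terse proof, particularly in spelling out why additivity respects the CT off-diagonal monotonicity requirement and in handling the $Ly_t$ term explicitly.
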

\begin{proof}
It is straightforward to show that the summation of decomposition functions of constituent systems, is a decomposition function of the summation of the constituent systems. 
\yong{Combining this with} Lemmas \ref{lem:linear_bounding} and \ref{lem:nonlinear_bounding} implies that $f_d(x_1,x_2) \triangleq g_{\ell d}(x_1,x_2)+g_{\nu d}(x_1,x_2)=(A-LC)^\uparrow x_1-(A-LC)^\downarrow x_2+\phi_d({x}_1,{x}_2)L^+\psi_d({x}_2,{x}_1)+L^-\psi_d({x}_1,{x}_2)$ is a decomposition function for the system in \eqref{eq:eqiv_sys}, and equivalently, for System \eqref{eq:system}. Consequently, the $2n$-dimensional system $\begin{bmatrix}{(\overline{x}}_t^+)^\top & ({\underline{x}}_t^+)^\top \end{bmatrix}^\top=\begin{bmatrix} {f}^\top_d(\overline{x}_t,\underline{x}_t) & {f}^\top_d(\underline{x}_t,\overline{x}_t)\end{bmatrix}^\top$ {with initial condition $\begin{bmatrix} \overline{x}_0^\top & \underline{x}_0^\top\end{bmatrix}^\top$}, is an embedding system for \eqref{eq:system} (cf. Definition \ref{def:embedding}). So, $\underline{x}_t \leq x_t \leq \overline{x}_t$, by Proposition \ref{cor:embedding}.
\end{proof}
\subsection{Stability}
\yong{Besides the correctness property that we already obtain\mkr{ed} by construction,} 
we are interested in studying \yong{the} stability of the proposed 
\mk{framer}. In other words, we \yong{wish to} design the observer gain $L$ such that the observer error, $\varepsilon_t \triangleq \overline{x}_t-\underline{x}_t$, converges to zero asymptotically (cf. Definitions \ref{defn:error} and \ref{defn:stability}). Before stating our main results on observer stability, we derive some upper bounds for \yong{the interval widths of the JSS functions in terms of the interval widths of their domains}, 
which will be helpful in deriving the 
stability conditions.  
\begin{lem}[JSS Function \yong{Interval Width} Bounding] \label{lem:func_increment}
Let $f:\mathcal{X} \subset \mathbb{R}^n \to \mathbb{R}^p$ \mk{be a mapping that satisfies the assumptions in Proposition \ref{prop:JSS_decomp} and hence, can be decomposed in the form of \eqref{eq:JSS_decomp}}.
 Let $\mu_d \triangleq [\mu_{d,1}\dots \mu_{d,p}]^\top : \mathcal{X} \times \mathcal{X} \to \mathbb{R}^p$ be the tight decomposition function for the JSS mapping $\mu(\cdot)$, given in Proposition \ref{prop:tight_decomp}. \mk{Then,} for any \yong{interval domain $\underline{x}\le x\le\overline{x},$ with} 
$x, \underline{x},\overline{x} \in \mathcal{X}$, 
the following inequality 
holds:
\begin{align}\label{eq:increment_bounding}
\hspace{-.1cm}\Delta \mu_{d} \leq \overline{F}_{\mu} \yong{\varepsilon}, \ \hspace{-.1cm}\text{where}\hspace{-.1cm} \ \overline{F}_{\mu} \triangleq \hspace{-.1cm} 2\max(\overline{J}_f\hspace{-.1cm}-\hspace{-.1cm}H,\mathbf{0}_{p.n})\hspace{-.1cm}-\hspace{-.1cm}\underline{J}_f\hspace{-.1cm}+\hspace{-.1cm}H.
\end{align} 
\mk{where $\Delta \mu_d \triangleq [\Delta \mu_{d,1} \dots \Delta \mu_{d,p}]^\top \triangleq \mu_d(\overline{x},\underline{x})-\mu_d(\underline{x},\overline{x})$ and $\varepsilon$ is the interval width of $[\underline{x},\overline{x}]$ (cf. Definition \ref{defn:interval})}.
\end{lem}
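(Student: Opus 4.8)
The plan is to prove the vector inequality \eqref{eq:increment_bounding} one component at a time, reducing each increment $\Delta\mu_{d,i}=\mu_{d,i}(\overline{x},\underline{x})-\mu_{d,i}(\underline{x},\overline{x})$ to a single mean-value expression and then exploiting the sign-definiteness of $J_\mu$. First I would use the explicit tight decomposition of Proposition \ref{prop:tight_decomp}: by \eqref{eq:JJ_decomp}--\eqref{eq:Dj}, each of $\mu_{d,i}(\overline{x},\underline{x})$ and $\mu_{d,i}(\underline{x},\overline{x})$ is simply $\mu_i$ evaluated at a specific vertex of the box $[\underline{x},\overline{x}]$, say $a_i$ and $b_i$, where the binary selector $D_i$ of \eqref{eq:Dj} keeps the upper endpoint $\overline{x}_j$ exactly on the coordinates in which $\mu_i$ is positive-JSS and the lower endpoint $\underline{x}_j$ on the negative-JSS coordinates (cf. Definition \ref{defn:JSS}). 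A short computation then gives $a_i-b_i=(2D_i-I_n)\varepsilon$, whose $j$-th entry is $+\varepsilon_j$ on the increasing (positive-JSS) coordinates and $-\varepsilon_j$ on the decreasing (negative-JSS) coordinates.

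Next I would apply the mean value theorem to the scalar map $\mu_i$ along the segment joining $b_i$ to $a_i$, writing $\Delta\mu_{d,i}=J_{\mu,i}(\xi)(a_i-b_i)=J_{\mu,i}(\xi)(2D_i-I_n)\varepsilon$, where $J_{\mu,i}(\xi)$ is the $i$-th row of the Jacobian of $\mu$. Here I must verify that $\xi$ stays in the domain: since $a_i,b_i$ are both vertices of the convex box $[\underline{x},\overline{x}]\subseteq\mathcal{X}$, the intermediate point lies in the box and hence in $\mathcal{X}$. I would then translate the hypothesized bounds on $J_f$ into bounds on $J_\mu$: because $\mu(x)=f(x)-Hx$ (cf. Proposition \ref{prop:JSS_decomp} and \eqref{eq:JSS_decomp}), we have $J_\mu=J_f-H$ and therefore $\underline{J}_f-H\le J_\mu(\xi)\le\overline{J}_f-H$ entrywise on $\mathcal{X}$, with $H=\underline{J}_f$ on the positive-JSS coordinates and $H=\overline{J}_f$ on the negative-JSS ones.

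Finally, using $\varepsilon\ge 0$ I would bound $J_{\mu,i}(\xi)(2D_i-I_n)\varepsilon$ termwise. On a positive-JSS coordinate the factor $(2D_i-I_n)_{jj}=+1$ and $(J_\mu)_{ij}(\xi)\ge 0$, so the summand is nonnegative and is bounded above by $\max(\overline{J}_f-H,\mathbf{0})_{ij}\,\varepsilon_j$; on a negative-JSS coordinate the factor is $-1$ and $(J_\mu)_{ij}(\xi)\le 0$, so the summand is again nonnegative and bounded by $(-\underline{J}_f+H)_{ij}\,\varepsilon_j$ (which is where Proposition \ref{prop:bounding}, splitting $J_\mu(\xi)$ into its positive and negative parts, makes the argument clean). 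Summing over $j$ yields $\Delta\mu_d\le\big(\max(\overline{J}_f-H,\mathbf{0}_{p,n})-\underline{J}_f+H\big)\varepsilon$, and since $\max(\overline{J}_f-H,\mathbf{0}_{p,n})\ge\mathbf{0}_{p,n}$ and $\varepsilon\ge 0$, this is dominated by $\overline{F}_\mu\varepsilon$ with $\overline{F}_\mu=2\max(\overline{J}_f-H,\mathbf{0}_{p,n})-\underline{J}_f+H$, establishing \eqref{eq:increment_bounding}. I expect the main obstacle to be precisely this sign bookkeeping: one must correctly match each vertex coordinate to the fixed sign of $J_{\mu,ij}$, track the $\pm$ pattern of $a_i-b_i$ so that every summand comes out nonnegative, and confirm the mean-value point remains in $\mathcal{X}$, since any sign mismatch would invalidate the entrywise domination that underlies the bound.
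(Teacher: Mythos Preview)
Your proposal is correct and follows essentially the same route as the paper's proof: both invoke the explicit vertex form \eqref{eq:JJ_decomp}, compute $a_i-b_i=(2D_i-I_n)\varepsilon$, apply the mean value theorem to write $\Delta\mu_{d,i}=J_{\mu,i}(\xi)(2D_i-I_n)\varepsilon$, and then exploit the sign-stability of $J_\mu$ together with $\varepsilon\ge 0$ to upper-bound each row. Your termwise sign bookkeeping is a bit more explicit than the paper's (which appeals to Proposition~\ref{prop:bounding} and the identity $\overline{J}_{\mu,i}D_i=\max(\overline{J}_{\mu,i},\mathbf{0}_{1,n})$ directly), and you additionally verify $\xi\in\mathcal{X}$ and pass through the slightly tighter intermediate bound $\max(\overline{J}_f-H,\mathbf{0})-\underline{J}_f+H$ before relaxing to $\overline{F}_\mu$; these are minor presentational differences, not a different argument.
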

\begin{proof}
First, note that using Proposition \ref{prop:tight_decomp}, $\forall i \in \mathbb{N}_m, \Delta \mu_{d,i} \triangleq \mu_{d,i}(\overline{x},\underline{x})-\mu_{d,i}(\underline{x},\overline{x})=\mu_i(D_i\overline{x}+(I-D_i)\underline{x})-\mu_i(D_i\underline{x}+(I-D_i)\overline{x})$, where $D_i$ is given in \eqref{eq:Dj} \yong{with $\overline{J}_{\mu,i}$ defined below}. Applying the mean value theorem, the last equality can be rewritten as $\Delta \mu_{d,i}=J_{\mu,i}(\xi)(D_i\overline{x}+(I-D_i)\underline{x}-D_i\underline{x}-(I-D_i)\overline{x}))=J_{\mu,i}(\xi)((2D_i-I)\overline{x}-(2D_i-I)\underline{x})=J_{\mu,i}(\xi)(2D_i-I)\yong{\varepsilon}$, where $\xi \in [\underline{x},\overline{x}]$, $J_{\mu,i}(\xi) \in [\underline{J}_{\mu,i},\overline{J}_{\mu,i}]=[\underline{J}_{f,i}-H_i,\overline{J}_{f,i}-H_i]$ and $H_i$ is the $i$'th row of $H$. \yong{Combining} this, Proposition \ref{prop:bounding} and the facts that $2D_i-I \geq 0$ (since $D_i$ is binary diagonal) and $\yong{\varepsilon} \geq 0$ (by \yong{the} correctness \yong{property}) \yong{yields} 
\begin{align}\label{eq:F_j}
\hspace{-.1cm}\Delta \mu_{d,i}\hspace{-.1cm} \leq\hspace{-.1cm} \overline{F}_{\mu,i} \yong{\varepsilon}, \text{where} \hspace{-.1cm} \ \overline{F}_{\mu,j} \hspace{-.1cm}\triangleq \hspace{-.1cm} 2\overline{J}_{f,i}D_i\hspace{-.1cm}-\hspace{-.1cm}\underline{J}_{f,i}\hspace{-.1cm}+\hspace{-.1cm}H_i(I\hspace{-.1cm}-\hspace{-.1cm}D_i).
\end{align} 
On the other hand, from \eqref{eq:Dj} we have $\overline{J}_{f,i}D_i=$
\begin{align}\label{eq:F_j_2}
\nonumber &[\overline{J}_{\mu,i,1}\max(\textstyle{\mathrm{sgn}}(J_{\mu,i,1}),{0}) \dots \overline{J}_{\mu,i,n}\max(\textstyle{\mathrm{sgn}}(J_{\mu,i,n}),{0})]\\
&\quad \quad \quad  =\max(\overline{J}_{\mu,i},\mathbf{0}_{1,n})=\max(\overline{J}_{f,i}-H_i,\mathbf{0}_{1,n}),
\end{align}
 where the second equality can be verified element-wise: for $j \in \mathbb{N}_n$, if $J_{\mu,i,j}$ is positive sign-stable, then $\textstyle{\mathrm{sgn}}(J_{\mu,i,j})=1$, hence $\max(\textstyle{\mathrm{sgn}}(J_{\mu,i,j}),{0})=1$, and therefore $\overline{J}_{\mu,i,j}\max(\textstyle{\mathrm{sgn}}(J_{\mu,i,j}),{0})=\overline{J}_{\mu,i,j}$. Moreover, the $j$'th entry of the row vector $\max(\overline{J}_{\mu,i},\mathbf{0}_{1,n})$ also equals $\max(\overline{J}_{\mu,i,j},{0})=\overline{J}_{\mu,i,j}$ since $\overline{J}_{\mu,i,j} \geq 0$ by positive sign-stability. The verification process can be done for the negative sign-stable case through similar reasoning. 
 Next, combining \eqref{eq:F_j} and \eqref{eq:F_j_2} 
 results in $\overline{F}_{\mu,i}=2\max(\overline{J}_{f,i}-H_i,\mathbf{0}_{1,n})-\underline{J}_{f,i}+H_i$, which when plugged into \eqref{eq:F_j} and stacked for all $i \in \mathbb{N}_p$, returns the result in \eqref{eq:increment_bounding}.     
\end{proof} 
Now, \mk{equipped} with the tools in Lemma \ref{lem:func_increment}, we derive sufficient LMI's to synthesize the \yong{stabilizing} observer gain $L$ for both DT and CT systems through the following theorem.
\begin{thm}[Stability]\label{thm:stability}
Consider the nonlinear plant $\mathcal{G}$ in \eqref{eq:system} and suppose all the assumptions in Lemma \ref{lem:correctness} hold. Then, the proposed correct interval \mk{framer} $\hat{\mathcal{G}}$ is stable, \mk{and hence, is an interval observer} in the sense of Definition \ref{defn:stability}, if there exist matrices $\mathbb{R}^{n \times n} \ni P \succ \mathbf{0}_{n,n}, X \in \mathbb{R}^{n \times n}$ and $J \in \mathbb{R}^{l \times n}$, \yong{$J \le 0$}, such that
\begin{enumerate}[(i)]
\item 
(if $\mathcal{G}$ is a CT system)  
\begin{align}\label{eq:CT_stability}
\hspace{-0.2cm}\begin{bmatrix} \Omega & \Lambda \\ \Lambda^\top & -\alpha (X+X^\top) \end{bmatrix}  \prec   0, 
\yong{J^\top C} \in  \mathbb{M}_n, X \in \mathbb{D}_n,
\end{align}
\yong{for all $\alpha >0$,} where 
\yong{$\Omega \triangleq ((A^m)^\top+\overline{F}_{\phi}^\top) X+X^\top (A^m+\overline{F}_{\phi})+(C^\top-\overline{F}_{\psi}^\top) J +J^\top (C  -\top \overline{F}_{\psi}) $ and $\Lambda \triangleq P+\alpha ((A^m)^\top+\overline{F}_{\phi}^\top) X+\alpha (C^\top  - \overline{F}_{\psi}^\top) J$;} 
\item (if $\mathcal{G}$ is a DT system) 
\begin{align}\label{eq:DT_stability}
\hspace{-.3cm}\begin{bmatrix} -P & \Gamma \\ \Gamma^\top & P- X- X^\top \end{bmatrix} \prec  0, 
J^\top C \leq 0, \yong{-X  \in \mk{\mathbb{M}_n}},
\end{align}
where $\Gamma \triangleq (|A|^\top+\overline{F}_{\phi}^\top)X-(|C|^\top+\overline{F}_{\psi}^\top)J$. 
\end{enumerate}
Furthermore, in both cases, $\overline{F}_{\phi}$ and $\overline{F}_{\psi}$ are computed by applying Lemma \ref{lem:func_increment} on the JSS functions $\phi$ and $\psi$, respectively. Finally, the corresponding stabilizing observer gain $L$ can be obtained as $L=-(X^\top)^{-1}J^\top .$
\end{thm}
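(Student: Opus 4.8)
The plan is to derive the dynamics of the framer error $\varepsilon_t \triangleq \overline{x}_t - \underline{x}_t$ directly from \eqref{eq:observer}, reduce its stability to a spectral (Schur/Hurwitz) condition on a nonnegative comparison matrix via the comparison principle, and finally recast that spectral condition as the stated LMIs through a Lyapunov argument and a change of variables. First I would subtract the $\underline{x}_t^+$ equation of \eqref{eq:observer} from the $\overline{x}_t^+$ equation. The $Ly_t$ terms cancel, and the linear part collapses to $\big((A-LC)^\uparrow + (A-LC)^\downarrow\big)\varepsilon_t$, which equals $|A-LC|\varepsilon_t$ in the DT case \eqref{eq:T_DT} and $(A-LC)^{\text{m}}\varepsilon_t$ in the CT case \eqref{eq:T_CT}. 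The two $\psi_d$ groups combine, using $L^+ + L^- = |L|$, into $|L|\,\Delta\psi_d$ with $\Delta\psi_d \triangleq \psi_d(\overline{x}_t,\underline{x}_t)-\psi_d(\underline{x}_t,\overline{x}_t)$, while the $\phi_d$ terms give $\Delta\phi_d$. Thus the error obeys $\varepsilon_t^+ = M_{\text{lin}}\varepsilon_t + \Delta\phi_d + |L|\Delta\psi_d$, where $M_{\text{lin}} \in \{\,|A-LC|,\,(A-LC)^{\text{m}}\,\}$.

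Next I would invoke Lemma \ref{lem:func_increment} to bound $0 \le \Delta\phi_d \le \overline{F}_{\phi}\varepsilon_t$ and $0 \le \Delta\psi_d \le \overline{F}_{\psi}\varepsilon_t$, where the lower bounds and $\varepsilon_t \ge 0$ come from correctness (Theorem \ref{lem:correctness}). Since $M_{\text{lin}}, |L|, \overline{F}_{\phi}, \overline{F}_{\psi}$ are all nonnegative, these inequalities chain into a comparison system $\varepsilon_t^+ \le M\varepsilon_t$ with $M \triangleq M_{\text{lin}} + \overline{F}_{\phi} + |L|\overline{F}_{\psi}$. In the DT case $M$ is nonnegative and iterating gives $\varepsilon_t \le M^t\varepsilon_0 \to 0$ provided $M$ is Schur; in the CT case $M$ is Metzler (cooperative) and the comparison lemma gives $\varepsilon_t \le e^{Mt}\varepsilon_0 \to 0$ provided $M$ is Hurwitz. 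This reduces stability to a spectral condition on $M$.

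Finally I would turn that spectral condition into the advertised LMIs. Bounding the closed loop by $|A-LC| \le |A|+|L||C|$ (and $(A-LC)^{\text{m}} \le A^{\text{m}}+|L||C|$ in the CT case), I would write a Lyapunov inequality — $M^\top P M - P \prec 0$ with $P \succ 0$ for DT, or $M^\top X + X M \prec 0$ with diagonal $X \succ 0$ for CT (exploiting that a Metzler matrix is Hurwitz iff it admits a diagonal Lyapunov function, which is why $X \in \mathbb{D}_n$) — and then apply the change of variables $J = -L^\top X$, i.e. $L = -(X^\top)^{-1}J^\top$, together with a slack-variable/Schur-complement manipulation to arrive at \eqref{eq:DT_stability} and \eqref{eq:CT_stability}.

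The main obstacle is that both $M$ and the Lyapunov inequality are genuinely nonlinear in the gain: the matrix $|L|$ and the products $X\,|A-LC|$ and $X|L|$ are neither affine nor convex in $L$. The crux is therefore to show that the sign and structural constraints — $J \le 0$, together with $J^\top C \le 0$ and $-X \in \mathbb{M}_n$ (DT), or $J^\top C \in \mathbb{M}_n$ and $X \in \mathbb{D}_n$ (CT) — are precisely what make the signed gain $L$ and its magnitude $|L|$ coincide inside the relevant matrices, so that after substituting $J = -L^\top X$ every absolute value and every bilinear product becomes affine in the decision variables $(P,X,J)$. Verifying that these constraints preserve the comparison inequality while producing a valid LMI, rather than merely a conservative surrogate, is where I expect the real work to lie; the remaining Schur-complement algebra is routine.
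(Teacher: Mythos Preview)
Your plan is essentially the paper's proof: derive the error dynamics by subtraction, upper-bound the nonlinear increments via Lemma~\ref{lem:func_increment}, pass to a linear comparison system, and then invoke a slack-variable Lyapunov LMI (the paper cites \cite[(10),(12)]{pipeleers2009extended}) after the change of variables $J=-L^\top X$. The one place where your sketch diverges from what is actually needed is the CT linear bound: you write $(A-LC)^{\text{m}}\le A^{\text{m}}+|L|\,|C|$, but the CT LMI \eqref{eq:CT_stability} carries $C$, not $|C|$, so that bound does not collapse to the stated condition. The paper instead uses the sharper split $(A-LC)^{\text{m}}\le A^{\text{m}}+(-LC)^{\text{m}}$ (triangle inequality only on the off-diagonal part) and then observes that the constraint $J^\top C\in\mathbb{M}_n$, together with $X\in\mathbb{D}_n$ positive, forces $-LC=(X^\top)^{-1}J^\top C$ to be Metzler, whence $(-LC)^{\text{m}}=-LC$; this is a second linearization step, separate from $|L|=L$, and is precisely why the Metzler constraint on $J^\top C$ appears in the CT case. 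With that refinement your argument goes through verbatim.
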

\begin{proof}
Starting from \eqref{eq:observer}, we first derive the \mk{framer} error ($\varepsilon_t \triangleq \overline{x}_t- \underline{x}_t$) dynamics. Then, we show that the provided conditions in \eqref{eq:CT_stability} and \eqref{eq:DT_stability} are sufficient for stability of the error system in the CT and DT cases, respectively. To do so, define $\Delta \mu_d  \triangleq \mu_d(\overline{x},\underline{x})-\mu_d(\overline{x},\underline{x}), \forall \mu \in \{\phi,\psi \}$ and note that the LMIs in \eqref{eq:CT_stability} and \eqref{eq:DT_stability} and Schur complements 
imply that $X$ is positive definite and hence invertible (non-singular) in both CT and DT cases. 

Now, considering the CT case, from \eqref{eq:observer} and \eqref{eq:T_CT}, we obtain  \yong{the observer error dynamics:}
\begin{align}
\label{eq:error_dynamics}\dot{\varepsilon}_t&=((A\hspace{-.1cm}-\hspace{-.1cm}LC)^\text{d}+|(A\hspace{-.1cm}-\hspace{-.1cm}LC)^\text{nd}|)\varepsilon_t+\Delta \phi_d\hspace{-.1cm}+\hspace{-.1cm}|L||\Delta \psi_d| \\
\nonumber & \leq (A^\text{d}-(LC)^\text{d}+|A^\text{nd}|\hspace{-.1cm}+\hspace{-.1cm}|-(LC)^\text{nd}|+\overline{F}_{\phi}+|L|\overline{F}_{\psi})\varepsilon_t \\
&=(A^\text{m}+(-LC)^\text{m}+\overline{F}_{\phi}+|L|\overline{F}_{\psi})\varepsilon_t,\label{eq:error_dynamics_2}
\end{align} 
where $\forall \mu \in \{\phi,\psi \},\overline{F}_{\mu}$ is given in \eqref{eq:increment_bounding}, the inequality holds by Lemma \ref{lem:func_increment}, Proposition \ref{prop:bounding}, and the facts that $\forall M,N \in \mathbb{R}^{n \times n}$, $(M+N)^\text{d}=M^\text{d}+N^\text{d},(M+N)^\text{nd}=M^\text{nd}+N^\text{nd}$, $|M+N| \leq |M|+|N|$ by triangle inequality and the fact that $\varepsilon_t \geq 0$ by the correctness \yong{property} (Lemma \ref{lem:correctness}). Now, note that by the \emph{Comparison Lemma} \cite[Lemma 3.4]{khalil2002nonlinear} and positivity of the systems in \eqref{eq:error_dynamics} and \eqref{eq:error_dynamics_2}, stability of the system in \eqref{eq:error_dynamics_2} implies 
\yong{stability} for the actual error system in \eqref{eq:error_dynamics}. To show the former, 
we \yong{require the following:} 

i) 
$J$ and $X$ 
\yong{are} non-positive and diagonal matrices, respectively: This forces $X$ and its inverse to be diagonal matrices with strictly positive diagonal elements, and since $J$ is forced to be non-positive, $L=-(X^\top)^{-1}J^\top$ must be non-negative, and hence $|L|=L$; 

ii)  $J^\top C$ is Metzler: \yong{This results in $-LC=(X^\top)^{-1}J^\top C$ being Metzler, since} 
\mk{it} 
is a product of 
a diagonal and positive matrix $(X^\top)^{-1}$ and 
a Metzler matrix $J^\top C$, \yong{and it can be shown that their product is Metzler. Thus, $(-LC)^\text{m}=-LC$.}

By i) and ii), the system in \eqref{eq:error_dynamics_2} turns into the linear \yong{comparison system $\dot{\varepsilon}_t  \le  (A^\text{m}-LC+\overline{F}_{\phi}+L\overline{F}_{\psi})\varepsilon_t$}, whose stability is guaranteed by the LMI in \eqref{eq:CT_stability} by \cite[(12)]{pipeleers2009extended}. 

For the DT case, from \eqref{eq:observer} and \eqref{eq:T_DT} and by similar reasoning to the CT case, we obtain  
\begin{align}
\label{eq:error_dynamics_DT}{\varepsilon}_{t+1}&=|A-LC|\varepsilon_t+\Delta \phi_d+|L||\Delta \psi_d| \\
&\leq(|A|+|LC|+\overline{F}_{\phi}+|L|\overline{F}_{\psi})\varepsilon_t. \label{eq:error_dynamics_2_DT}
\end{align} 
  In addition, we enforce \yong{$-X$ to be Metzler,} 
  as well as $J$ and $J^\top C$ to be non-positive. 
  Consequently, \yong{since $X$ is positive definite,} 
  $X$ becomes a non-singular M-matrix\footnote{An M-matrix is a square matrix \yong{whose negation is Metzler and whose eigenvalues have nonnegative real parts}.}, and hence is inverse-positive \cite[Theorem 1]{plemmons1977m}, i.e., $X^{-1} \geq 0$. Therefore, $L=-(X^\top)^{-1}J^\top \geq 0$ and $LC=(X^\top)^{-1}(-J^\top C) \geq 0$, because they are matrix products of non-negative matrices, $(X^\top)^{-1},(-J^\top)$ and $(X^\top)^{-1},(-J^\top C)$, respectively. Hence, $|L|=L,|LC|=LC$, and so, the system in \eqref{eq:error_dynamics_2_DT} turns into \yong{${\varepsilon}_{t+1}\le(|A|+LC+\overline{F}_{\phi}+L\overline{F}_{\psi})\varepsilon_t$}, which is stable if the LMI in \eqref{eq:DT_stability} holds, by \cite[(10)]{pipeleers2009extended}.       
\end{proof}

\vspace{-0.1cm}
\mk{Finally, note that 
a coordinate transformation (cf. \cite{mazenc2021when} and references therein) may also be helpful for making the LMIs in Theorem \ref{thm:stability} feasible, as observed in Section \ref{sec:CT_exm}.}

\vspace{-0.1cm}
\section{Illustrative Examples}
The effectiveness of our interval observer design is illustrated for CT and DT systems \mk{(using YALMIP \cite{Lofberg2004}).}

\vspace{-0.1cm}
\subsection{CT System Example}\label{sec:CT_exm}
Consider 
the CT system\footnote{\mk{$a_1=35.63, b_1=15, a_2=0.25,a_3=36,a_4=200, \mathcal{X}_0 = [19.5, 9] \times [9, 11] \times [0.5,1.5]$.}} 
in \mk{\cite[Section IV, Eq. (30)]{dinh2014interval}}:
\mk{
\begin{gather*}
\dot{x}_{1} = x_{2}, \quad \dot{x}_{2}=b_1x_3-a_1\sin(x_1)-a_2x_2, \\ 
 \dot{x}_3=-a_2a_3x_1\hspace{-.05cm}+\hspace{-.05cm}\frac{a_1}{b_1}(a_4\sin(x_1)\hspace{-.05cm}+\hspace{-.05cm}\cos(x_1)x_2)\hspace{-.05cm}-\hspace{-.05cm}a_3x_2\hspace{-.05cm}-\hspace{-.05cm}a_4x_3,
\end{gather*}
with 
output $y=x_1$. 
}
\mk{Without a coordinate transformation, the LMIs in \eqref{eq:CT_stability} were infeasible, but with a coordinate transformation $z=Tx$ with $T={\footnotesize{\begin{bmatrix}20 & 0.1 & 0.1\\ 0 & 0.01 & 0.06\\0 &-10 & -0.4\end{bmatrix}}}$ and adding and subtracting $5y$ to the dynamics of $\dot{x}_1$, we}  
obtained the 
observer gain \mk{$L\hspace{-0.05cm}=\hspace{-0.05cm}10^{-6}\times [3.44 \ \, 0\ \, 0.04]^\top\hspace{-0.05cm}$}. 
As shown in 
Figure \ref{fig:figure1} \mk{($x_1,x_2$ omitted for brevity)}, 
 the state framers \mk{returned by our approach, 
 $\underline{x},\overline{x}$, are tighter than the ones obtained by the interval observer in \cite{dinh2014interval}, $\underline{x}^{DMN},\overline{x}^{DMN}$ (primarily because of outer-approximations of the initial framers $\mathcal{X}_0$ due to different coordinate transformations). Further, 
the framer error $\varepsilon_t= \overline{x}_t-\mkr{\underline{x}}_t$ is observed to 
tend to zero asymptotically.} 

 \subsection{DT System Example}\label{sec:DT_exm}
Consider a variant of DT H\'enon chaos system model \cite{Observer_discrete}:
\begin{align}
\label{eq:exampletwo}   
x_{t+1} =  Ax_t + r[1 -x_{t,1}^2 ], \quad y_t =  x_{t,1},
\end{align}
where 
$A =
\small{\begin{bmatrix}
0 & 1\\
0.3 & 0 
\end{bmatrix}}$, $r =\small{\begin{bmatrix}0.05 \\ 0\end{bmatrix}}$ and $\mathcal{X}_0 = [-2, 2] \times [-1, 1]$. 
Employing YALMIP to solve the corresponding LMIs in \eqref{eq:DT_stability}, a stabilizing observer gain $L =[ 0.0393 \ 0.0346]^\top$ was obtained.  
Figure \ref{fig:figure2} \mk{($x_1$ omitted for brevity)}
 \mk{shows that our observer returns comparable estimates to those obtained from the approach in \cite{tahir2021synthesis} and the framer error converges to zero.} 
 

\begin{figure}[t!] 
\centering
{\includegraphics[width=0.48\columnwidth,trim=5mm -5mm 10mm 0mm]{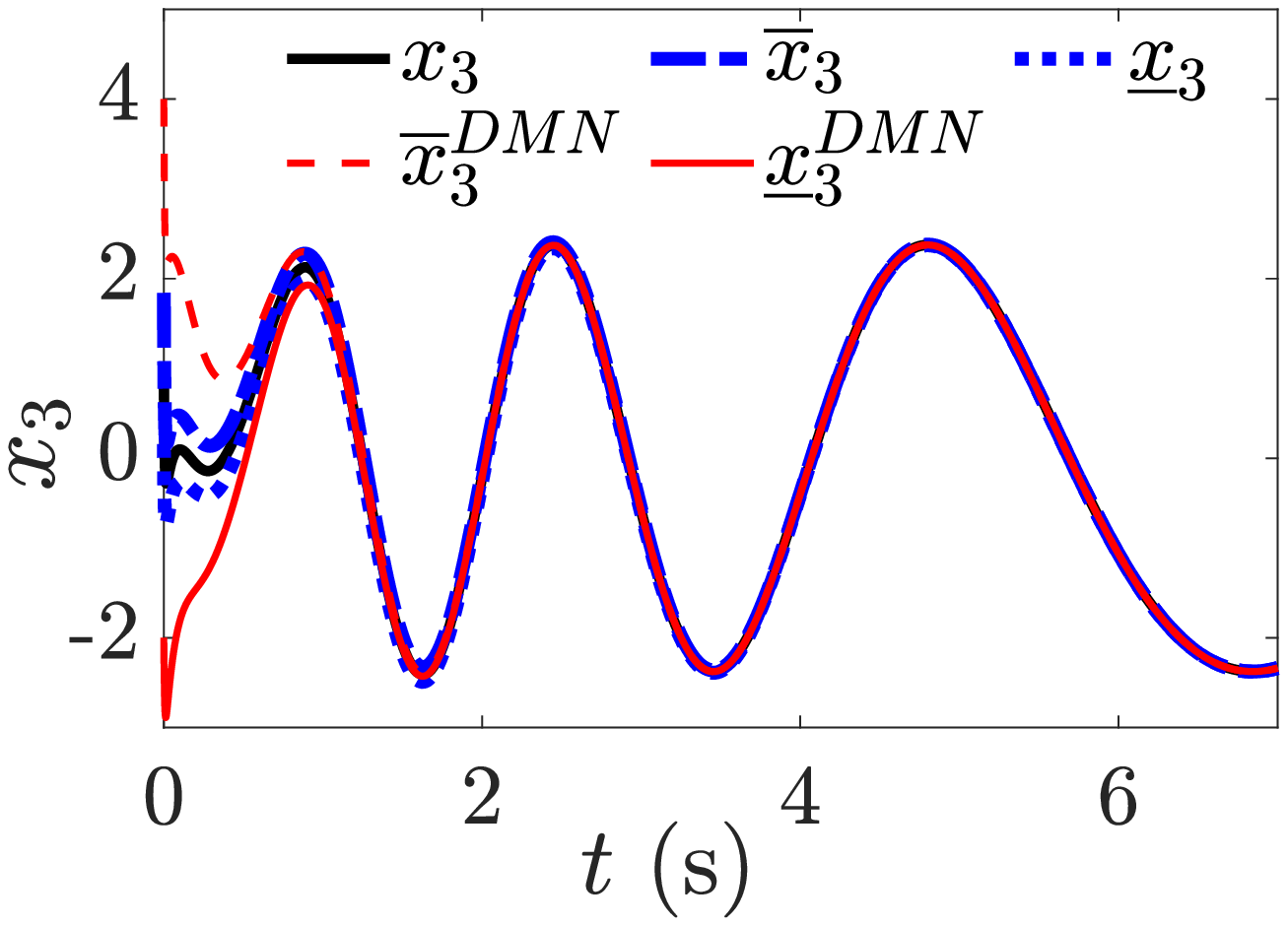}} \label{fig:sub1}
{\includegraphics[width=0.48\columnwidth,trim=5mm -5mm 10mm 0mm]{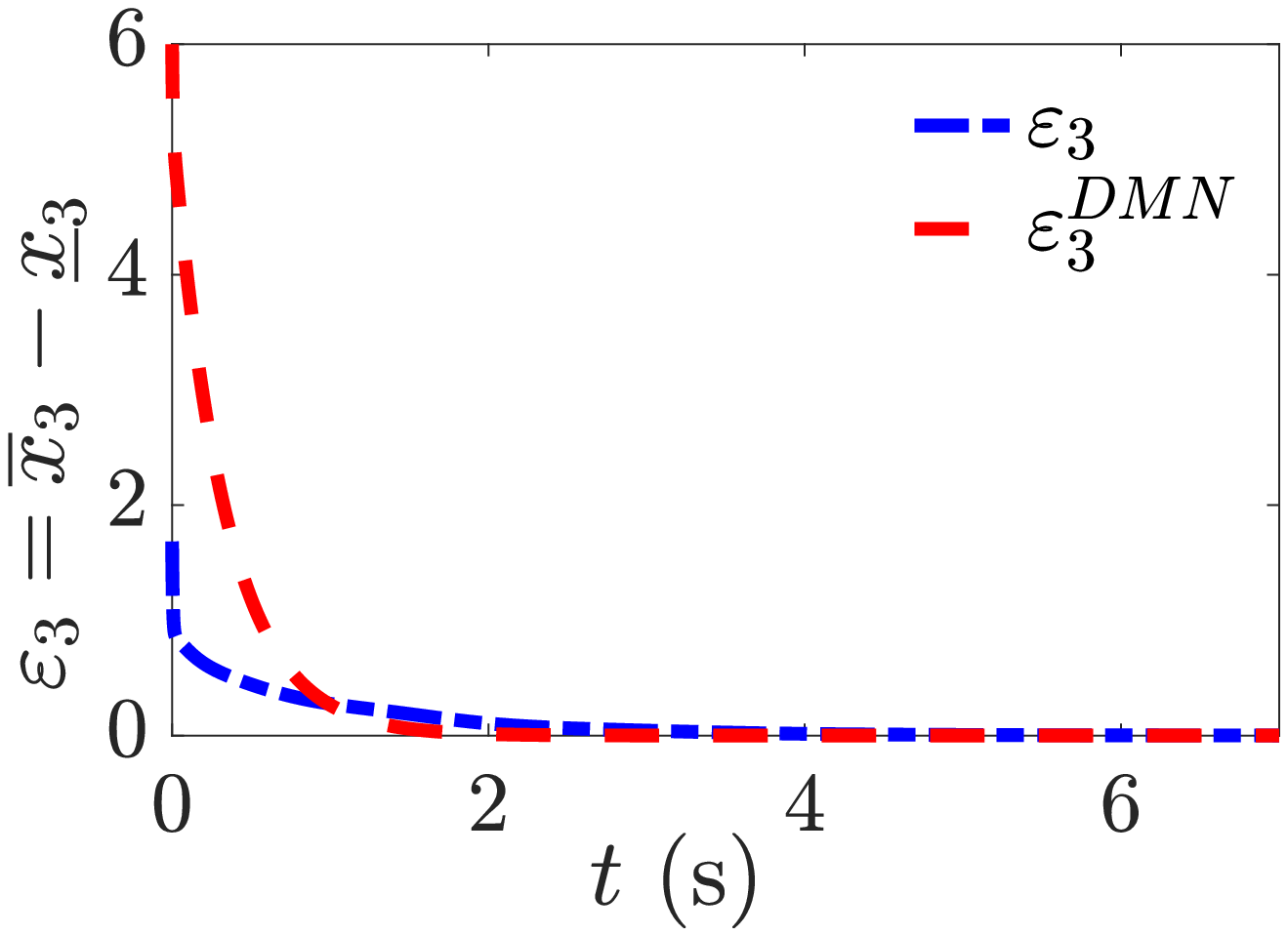}} \label{fig:sub2}
\vspace{-.3cm}
\caption{{{\small State, $x_3$, as well as its upper and lower framers and error \mk{returned by our proposed observer}, $\overline{x}_3,\underline{x}_3,\varepsilon_3$, \mk{and by the observer in \cite{dinh2014interval}, $\overline{x}^{DMN}_3,\underline{x}^{DMN}_3,\varepsilon^{DMN}_3$} for the CT System example.}}}
\label{fig:figure1}
\end{figure}

\begin{figure}[t!] 
\centering
{\includegraphics[width=0.48\columnwidth,trim=10mm 5mm 10mm 0mm]{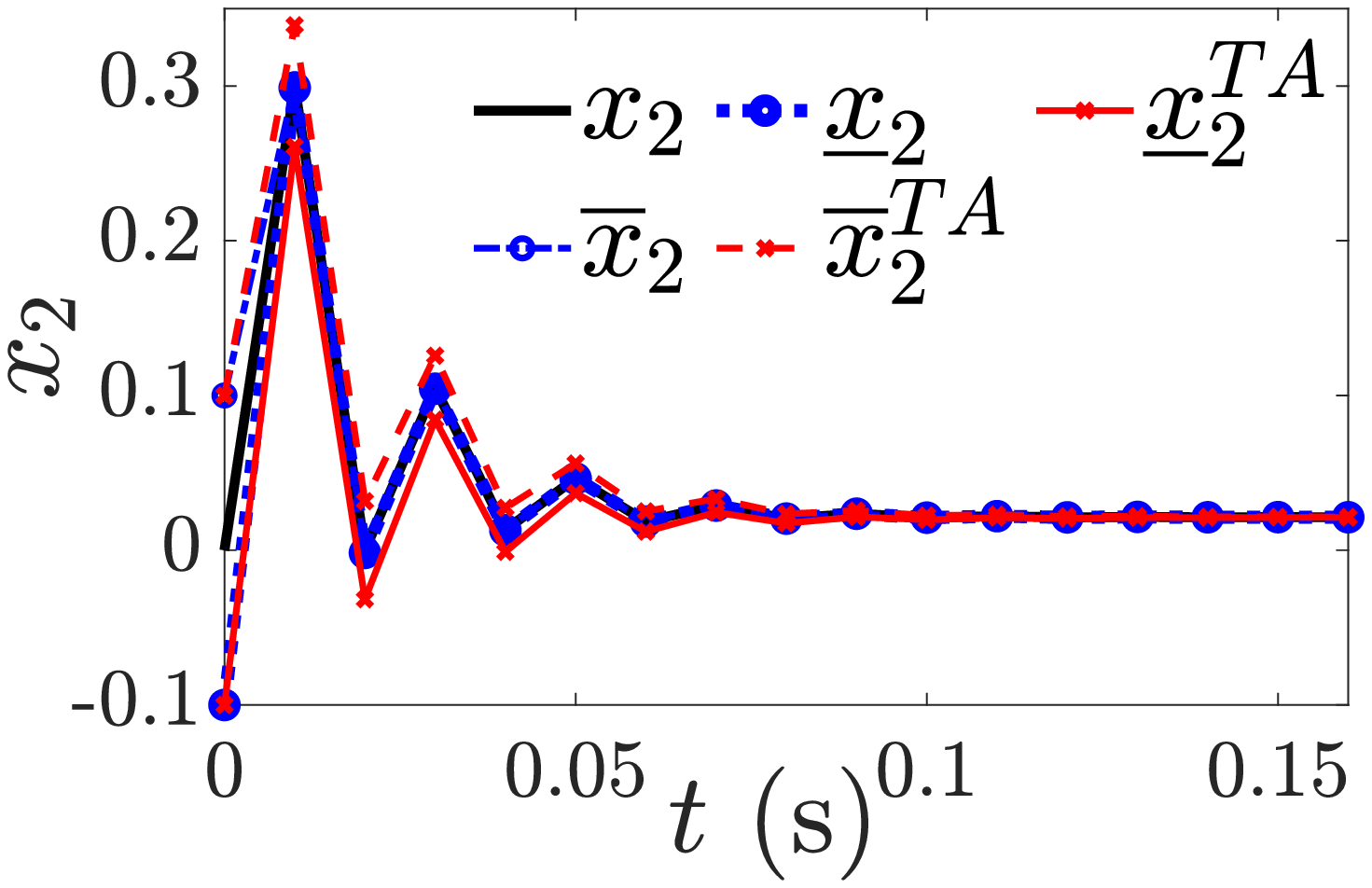}}\label{fig:sub4}
{\includegraphics[width=0.48\columnwidth,trim=7mm 5mm 10mm 0mm]{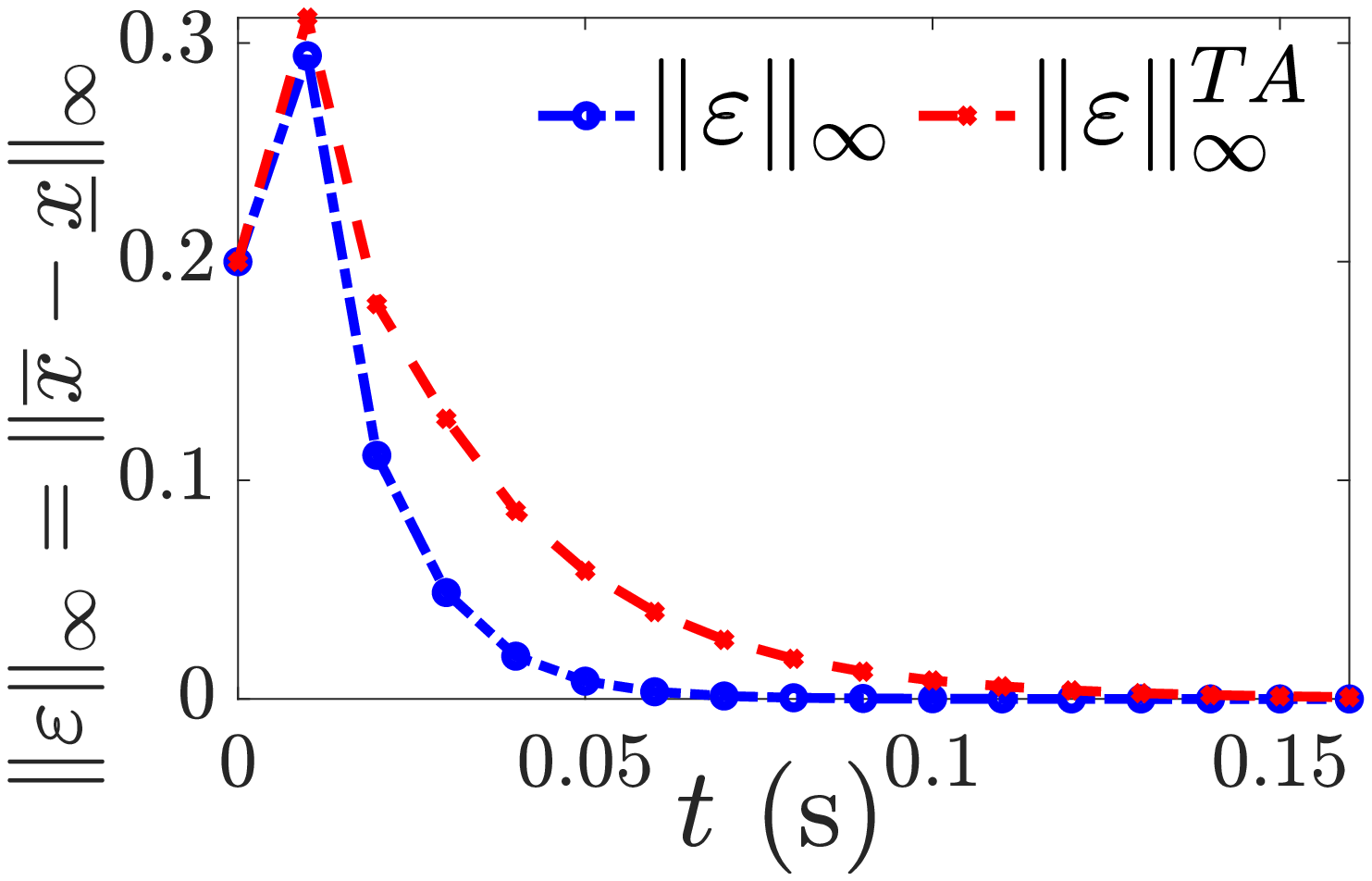}}\label{fig:sub5}
\caption{\small\mk{State, $x_2$, and its upper and lower framers, \mk{returned by our proposed observer}, $\overline{x}_2,\underline{x}_2$, \mk{and by the observer in \cite{tahir2021synthesis}, $\overline{x}^{TA}_2,\underline{x}^{TA}_2$} (left) and norm of framer error (right) for the DT System example.}}
\label{fig:figure2}
\end{figure}

\section{Conclusion} \label{sec:conclusion}

A novel unified interval observer synthesis approach was presented for locally Lipschitz nonlinear \yong{continuous-time (CT) and discrete-time (DT)} systems with nonlinear observations. Leveraging {mixed-monotone decompositions}, the proposed observer satisfies the \emph{\yong{correctness} property} by construction, i.e., the true state trajectory of the system was shown to be \emph{framed} by the states of the observer \yong{at} all times, \yong{without needing restrictive assumptions such as global Lipschitz continuity or contraction. 
Moreover, by solving a semi-definite program based on some sufficient conditions with LMIs, a stabilizing observer gain was designed to ensure that the observer errors converge to zero asymptotically.} Finally, the effectiveness of the proposed observer, \mk{
when compared to some benchmark observers, was} demonstrated using illustrative  DT and CT system examples. \mo{In our future work, we will consider noise and uncertainties within our framework as well as hybrid system dynamics, and extend 
our approach to simultaneous input and state \yong{estimation}. } 
\bibliographystyle{unsrturl}

\bibliography{biblio}

\end{document}